\newcommand {\cA}{{\mathcal{A}}}
\newcommand {\cD}{{\mathcal{D}}}
\newcommand {\cX}{{\mathcal{X}}}
\newcommand {\bff} {{\bf f}}
\newcommand {\ba} {{\bf a}}
\newcommand {\bx} {{\bf x}}
\newcommand {\bs} {{\bf s}}
\newcommand {\bw} {{\bf w}}
\newcommand{\avg}{{\rm avg}}
\newcommand {\N} {{\rm I\kern-1.5pt N}}
\newcommand {\R} {{\rm I\kern-2.5pt R}}
\newtheorem{lemma}{Lemma}
\newtheorem{coro}{Corollary}
\newtheorem{theorem}{Theorem}
\newtheorem{defn}{Definition}
\newtheorem{assm}{Assumption}
\newcommand{\beqa}{\begin{eqnarray}}
\newcommand{\eeqa}{\end{eqnarray}}
\newcommand{\beqan}{\begin{eqnarray*}}
\newcommand{\eeqan}{\end{eqnarray*}}
\newcommand{\beq}{\begin{equation}}
\newcommand{\eeq}{\end{equation}}
\newcommand{\bfl}{\begin{flushleft}}
\newcommand{\efl}{\end{flushleft}}
\newcommand{\myb}{\hspace{-0.1in}}
\newcommand{\myeq}{& \hspace{-0.1in} = & \hspace{-0.1in}}
\newcommand{\lb}{\nonumber \\}
\newcommand{\myarr}{\begin{array}{lll}}
\newcommand{\mygeq}{& \myb \geq & \myb}
\newcommand{\myleq}{& \myb \leq & \myb}
\newcommand{\bitem}{\begin{itemize}}
\newcommand{\eitem}{\end{itemize}}
\newcommand{\benum}{\begin{enumerate}}
\newcommand{\eenum}{\end{enumerate}}
\newcommand{\myhb}{\hspace{-0.3in}}
\newcommand{\opt}{{\rm opt}}
\def\QED{~\rule[-1pt]{5pt}{5pt}\par\medskip}
\newenvironment{proof}{{\bf Proof: \ }}{ \hfill \QED}
\newcommand{\myskip}{\\ \vspace{-0.1in}}
\begin{document}

\title{Internalization of Externalities in 
	Interdependent Security: Large Network Cases}

\author{Richard J. La\thanks{This work was 
supported in part by contracts 70NANB13H012 and 
70NANB14H015 from National Institute
of Standards and Technology.} 
\thanks{Author is with the Department of Electrical \& 
Computer Engineering (ECE) and the Institute for Systems 
Research (ISR) at the University of Maryland, College Park.
E-mail: hyongla@umd.edu}
}

\maketitle

\begin{abstract}
With increasing connectivity among comprising agents or  
(sub-)systems in large, 
complex systems,  there is a growing interest in understanding 
interdependent security and dealing with inefficiency in security
investments. 
Making use of a population game model and the well-known 
Chung-Lu random graph model, 
we study how one could encourage selfish agents to invest more in 
security by internalizing the externalities produced by their
security investments. 

To this end, we first establish an interesting relation between 
the local minimizers of social cost and the Nash equilibria of 
a population game with slightly altered costs. Secondly, under
a mild technical assumption, we demonstrate that there exists 
a unique minimizer of social cost and it 
coincides with the unique Nash equilibrium
of the population game.  This finding tells us how to modify 
the private cost functions of selfish agents in order to enhance
the overall security and reduce social cost.  In addition, it reveals
how the sensitivity of overall security to security investments
of agents influences their externalities and, consequently, 
penalties or taxes that should be imposed for internalization of
externalities.  
Finally, we illustrate how the degree distribution of agents
influences their security investments and overall security
at both the NEs of population games and social optima.
\end{abstract}

\begin{IEEEkeywords}
Game theory, interdependent security, internalization of 
externalities.
\end{IEEEkeywords}

\IEEEpeerreviewmaketitle

\section{Introduction}	\label{sec:Introduction}

Today, many engineering, financial and social systems (e.g., the 
Internet, power grids, equity and commodity markets, and social 
networks) are highly connected. 
In such systems, the security of an individual, organization 
or system is often 
dependent not only on its own security measures, but also on those 
of others.\footnote{We refer to individuals, organizations 
and even countries in these settings as {\em agents}.}
This is dubbed {\em interdependent security} (IDS) by Heal and 
Kunreuther \cite{HealKun2004}, and 
arises naturally in many areas including 
cybersecurity \cite{BolotLelarge2008, Jiang2011,
LelargeBolot2008, LelargeBolot2009}, 
cyber-physical systems security (e.g., power grids)
\cite{NIST_SmartGrid, Bou-Harb}, 
epidemiology \cite{Pastor2005, Schneider2011}, 
financial networks and systems~\cite{Beale2011, Caccioli2011, 
Caccioli2012}, 
homeland security~\cite{HealKun2002, KunMichel2009}, and
supply chain and transportation system security (e.g., 
airline security)~\cite{Gkonis2010, HealKun2004, 
KearnsOrtiz, KunHeal2003}. Given 
the increasing connectivity among the systems
comprising critical infrastructures (e.g., smart grids), IDS has 
emerged as an active and vital research area. 

The coupling or interdependence in security among agents in IDS 
is often modeled using a {\em dependence graph}: the nodes are 
agents, and the existence of an (undirected) edge between 
two nodes signals interdependence of their security. 
Some of key challenges to tackling IDS in {\em large} 
networks or systems
are: (i) participating agents are often strategic and are 
interested only in their own security or objectives, 
rather than the security or cost of the overall system, 
(ii) when an agent invests in security measures, it produces
{\em externalities} and {\em network effects}
for its neighbors, and 
(iii) any attempt to model {\em detailed} interactions 
among many strategic agents suffers from the {\em curse 
of dimensionality}.

Let us illustrate some of these concepts and motivations for
our study using following examples.
\\ \vspace{-0.12in}

\noindent {\bf {\em E1.}}
{\em Spread of malware through emails:} 
When a user is infected by malware, it can scan the user's 
emails or the hard disk drive of the infected machine 
and send the 
user's personal or other confidential information to 
criminals interested in stealing, for instance, the user's 
identify (ID) or trade secrets. 
Moreover, the malware can browse the user's address book 
and either forward it to attackers or 
send out bogus emails, i.e., email spoofing, 
with a link or an attachment to those on the contact list. 
When a recipient clicks on the link or
opens the attachment, it too becomes infected.

In order to reduce the risks or threats from
malware, users can 
install an anti-malware utility on their 
devices. When a user adopts an anti-malware tool, 
not only does it reduce its own risk, but it
also curbs the risk to those 
on its address book for the reason stated above, 
thereby protecting its friends to some degree. 
Therefore, it produces {\em positive externalities} 
for others \cite{ShapiroVarian, Varian_Microeconomics}. 
Interestingly, these positive externalities diminish 
the value of installing
anti-malware utilities for others, thus introducing
{\em negative network effects} for them. 
\\ \vspace{-0.12in}

\noindent {\bf {\em E2.}}
{\em Organizational networks:} 
Organizational information networks are typically 
protected by various security measures, including 
replication of data storage and information, network 
monitoring systems and incoming traffic monitoring. The
choices of employed security measures may depend on the 
magnitude of potential financial and other losses (e.g., 
damages to reputation), 
desired network/system dependability as well as
available budget for security investments~\cite{AndersonMoore}. 

Organizational networks are interconnected and, in many cases, 
share information. Thus, when some networks are more vulnerable, 
they may serve as a stepping stone for sophisticated hackers to 
gain a foothold inside the network and ultimately 
access to even better protected high-value targets by using 
known vulnerabilities \cite{NVD} or zero-day exploits
\cite{BilgeDumit2012}. 
\\ \vspace{-0.1in}

It is well documented \cite{Bary, Gordon2015, Varian} 
that the selfish nature of agents leads 
to inefficiency in many settings, including
under-investments in security, 
thanks to {\em free riding} 
in part caused by positive externalities
illustrated in the first example. 
Therefore, researchers have been searching for ways to
improve overall security, including cyberinsurance and 
incentive mechanisms to increase the security investments 
by selfish agents (e.g., \cite{Hofmann, Jiang2011, LelargeBolot2009, 
Ogut2005, Zhao2009}). 

One promising approach to enhancing overall security is through 
{\em internalization of externalities}~\cite{Varian_Microeconomics},
which is the focus of this paper. 
This requires altering the (private) cost of agents so that their 
costs reflect their contribution to the social cost. A key challenge 
to implementing this 
lies in correctly quantifying the externalities
produced by each agent and accounting for them in its cost function.
 
We aim to explore (a) how we can measure (or approximate)
the externalities generated by agents and signal them correctly
to the agents, and 
(b) how the {\em sensitivity} of overall security to the security 
investments of agents shapes the externalities and, hence, 
penalties/taxes that ought to be levied on them (as a 
part of their cost functions) to 
internalize their externalities. 
In particular, we are interested in scenarios with {\em many} 
agents in large networks and systems. 

To this end, we consider scenarios where malicious entities
or attackers 
launch attacks against agents, for example, in hopes 
of infecting/taking control of machines or 
gaining unauthorized access to private
information of victims. Not only can agents suffer damages or
losses from {\em direct} attacks by attackers, but the 
victims of successful direct attacks may also unknowingly 
help the attackers unleash {\em indirect} attacks on their 
neighbors. 

We assume that the agents are rational and interested in 
minimizing their own costs. In the face of possible attacks, 
they manage their risks by investing in a number of 
security measures. Unfortunately, it is hard to 
model the details of strategic interactions among 
many agents due to the curse of dimensionality.

In order to skirt this difficulty, we employ a {\em population 
game} model \cite{Sandholm} with the help of the so-called
{\em Chung-Lu random graph model}~\cite{ChungLu}. A population
game is often used to study strategic interactions between 
a large number of agents, possibly from different populations. 
This novel framework allows us to examine the network- or 
system-level security in IDS settings with many 
comprising agents. 

We adopt a well known solution concept, namely {\em Nash 
equilibrium} (NE) of the population game, as an approximation 
to agents' behavior in practice.   
Our goal is to understand how the NEs of population games 
are related to the social optima with the objective of 
identifying a potential means of internalizing the 
externalities produced by agents' security investments. 

Our main contributions can be summarized as follows. 
\benum

\item We reveal an intriguing relation between the NEs of 
population games and social optima. More specifically, we
show that {\em any local minimizer} of social cost is
a pure-strategy 
NE of a slightly modified population game in which agents'
cost functions are altered to account for the externalities
brought on by their security decisions. Therefore, the set of
local minimizers of social cost is {\em contained}
in the set of pure-strategy NEs of the modified population 
game (Section \ref{subsec:Relation}).

\item Using this relation, we establish 
under a mild technical condition that there exists a 
{\em unique} pure-strategy NE of the aforementioned modified 
population game and it coincides with the {\em unique} 
(global) minimizer of social cost,
without requiring convexity of social cost 
(Section~\ref{subsec:Internalization}).

\item We demonstrate that an agent 
with a {\em fixed} degree suffers {\em fewer} attacks 
both at an NE of the population game and at a social 
optimum as the weighted node 
degree distribution of the dependence graph 
(with node degrees as weights) 
becomes stochastically larger \cite{ShakedShan}. Hence, 
as the dependence graph 
becomes more connected, indicating a higher level of
interdependence in security, a fixed-degree node
will likely experience better local security 
(Section~\ref{sec:Property}). 

\eenum

To the best of our knowledge, our work is the first study 
to examine the possibility of internalizing externalities,
based on the relation between the NEs of noncooperative
games and social optima with many agents in IDS settings. 
It is true that our study is conducted using a simplified 
model under several assumptions for 
analytical tractability. For this reason, it is not 
our intention to claim 
that our model accurately represents the complex reality 
and the {\em quantitative} aspects of our findings will 
hold in practice. 

Instead, our hope is that even this simple model will 
help us develop valuable insight into the {\em qualitative} 
nature of (aggregate) behavior of agents with help of 
analytical findings. 
Furthermore, our findings will likely shed some 
light on (a) how we can improve the overall security 
via internalization of externalities and (b) how the 
underlying interdependency of security among agents affects the
security experienced by them and in turn influences their 
security investments in more realistic settings.  

The rest of the paper is organized as follows. We briefly 
summarize some
of existing studies that are most closely related to 
our study in Section~\ref{sec:Related}. Section \ref{sec:Model}
describes the population game model we adopt for our analysis, 
followed by some preliminary results in Section~\ref{sec:Preliminary}. 
Our main findings on the relation between the NEs of population 
games and the local minimizers of social cost 
as well as the approximation 
of (negative) externalities produced by agents are presented 
in Section~\ref{sec:Internalization}. The effects of weighted
node degree distribution on local security and negative
externalities are reported in Section~\ref{sec:Property}. 
Some numerical results are presented in Section
\ref{sec:Numerical}. 
We conclude in Section~\ref{sec:Conclusion}.

\section{Related literature} 	\label{sec:Related}

As mentioned in Section~\ref{sec:Introduction}, 
Kunreuther and Heal \cite{HealKun2004, KunHeal2003}
studied IDS where the security of involved 
parties is interdependent. IDS scenarios 
with strategic agents are often studied in game theoretic 
settings, e.g., \cite{Gross2008, Jiang2011, 
KearnsOrtiz, Miura2008a}. 
We refer an interested reader to a survey paper by Laszka et al. 
\cite{Laszka} and references therein for a succinct discussion of 
these and other related studies. Here, we focus on several 
studies that are most relevant to ours and summarize their main 
findings. 

It is well known that the existence of externalities
often leads to inefficient equilibria due to market 
failures (e.g., \cite{Gordon2015, Zhao2009}). 
In particular, it is shown that positive externalities 
(resp. negative externalities) produced by security 
investments result in under-investments 
(resp. over-investments) in security~\cite{Zhao2009}.

In order to address this inefficiency in security investments, 
researchers explored
various means of internalizing externalities, including
taxation, cyberinsurance, regulations and coordinating
mechanisms~\cite{KunHeal2003} with cyberinsurance being
a popular approach extensively studied in the literature
\cite{Hofmann, LelargeBolot2009, Ogut2005, Zhao2009}.

Lelarge and Bolot \cite{LelargeBolot2009} studied the 
problem of incentivizing organizations to invest in security 
through taxation and insurance. They showed that, {\em in the 
absence of moral hazard}, insurance may be used to encourage
organizations to protect themselves. However, they did not 
suggest how the issue of moral hazard can be handled in 
practice, which is well known in the insurance
field and is difficult to rid of. 

Ogut et al.~\cite{Ogut2005} investigated the impact of 
interdependency of security on the choices for
security investments and cyberinsurance. Their findings 
show that the interdependence of security tends to reduce
the organizations' incentive to invest in security measures and 
cyberinsurance. More importantly, they suggest that even a more 
mature or developed cyberinsurance market may not 
promote cyberinsurance unless the price of insurance comes 
down and that the cyberinsurance market may fail due to 
correlated damages/incidents caused by risk interdependency, 
which can result in catastrophic losses for insurers. 

Hofmann~\cite{Hofmann} examined the possibility of monopolistic 
insurer and demonstrated that insurance monopoly
can result in higher efficiency than a competitive insurance market.
Moreover, the author suggested that the monopolistic insurer might be 
able to achieve the social optimum and reduced losses
by exercising premium discrimination
even in the case of imperfect information.  

In another interesting study, Zhao et al.~\cite{Zhao2009} considered
two alternative risk management schemes -- risk pooling 
arrangements  (RPAs) and managed security services (MSSs). 
They showed that while RPAs can be used to complement cyberinsurance
to address the over-investment issue in the case of negative 
externalities from security investments, 
it is not effective at coping with under-investments
when security investments generate positive externalities because it is not
incentive-compatible. In addition, not surprisingly, an MSS provider can 
internalize the externalities of security investments. However, 
the agents have an incentive to outsource their security management
to an MSS provider only when the number of agents is small. Thus, 
this approach fails to address the security investment inefficiency with 
a large number of agents.  

Naghizadeh and Liu~\cite{NagLiu2014} studied the problem of internalizing 
the externalities produced by strategic agents in IDS games. 
They proposed an {\em incentive-compatible} algorithm that allows the 
players to converge to a socially efficient state at an NE. 
However, the proposed scheme is not individually rational, 
and the existence of an algorithm that simultaneously achieves i)
incentive compatibility, ii) efficiency and iii) individual 
rationality remains an open problem. 

While the above studies explore different approaches to internalizing the
externalities of security investments, there are major differences from 
our study. 
First, these studies do not examine the relation between the
equilibria of IDS games and social optima. Second, our model attempts
to reflect the underlying dependence graph that captures the 
interdependence in security among agents by modeling varying degrees
of the agents,
in order to estimate their security 
risks. The studies in \cite{Hofmann, Ogut2005, Zhao2009}
do not take into account the properties of underlying dependence graph. 
Finally, we study how the degree distribution of agents (in the dependence
graph) affects their security risks and investments as well as the ensuing
externalities. To the best of our knowledge, 
this is the first analytical result that sheds light on how the 
degree distribution of dependence graph shapes the resulting network
security and externalities. 

We studied a related problem in \cite{La_TON_Cascade}, namely
how various network parameters influence the {\em (global) 
cascade} probability. We argued that the cascade 
probability can be considered a {\em global} measure of 
network security, for it measures how likely an infection, 
starting with one or a small number of initially infected 
agents, may spread to a large number of other agents. 
Not only are the emphasis and findings of \cite{La_TON_Cascade}
different from those of the current study, but also 
the model we employ in \cite{La_TON_Cascade} is different 
from that used here in two ways. First, we allowed only binary 
security choices to facilitate the analysis in~\cite{La_TON_Cascade},
whereas we consider a continuous action space representing varying
amounts of security investments in this study. 
Second, in 
\cite{La_TON_Cascade} we assumed a specific way in which 
infections can transmit multiple hops and studied how the infection 
propagation rate affects resulting cascade probabilities at the NEs. 
In the current study, we do not fix the dynamics of infection 
propagation through a network. Instead, we abstract out
security risks that agents might see under different propagation 
models by using a function and investigate how the `shape' of
the function affects the externalities of security investments and 
resulting penalties necessary to internalize them.

\section{Model and problem formulation}	\label{sec:Model}

We capture the (inter-)dependence of security among the agents 
using an undirected graph, which we call the {\em dependence
graph}. A node or vertex in the graph corresponds to an agent (e.g., 
an individual or organization), and an undirected edge between nodes 
$n_1$ and $n_2$ indicates 
interdependence of their security. We interpret an undirected edge
as a pair of directed edges pointing in the opposite directions
with an understanding that a directed edge from node $n_1$ to node
$n_2$ indicates that the security of node $n_1$ affects that of
node $n_2$ in the manner we explain shortly. When there is
an edge between two nodes, we say that they are {\em immediate}
or {\em one-hop} neighbors or, simply, neighbors when it is clear.

We model the interaction among the agents as a {\em noncooperative 
game}, in which players are the agents in the dependence 
graph.\footnote{We will use the words {\em agents}, {\em nodes} and 
{\em players} interchangeably hereafter.} 
This is reasonable because, in many cases, it may be difficult for  
agents to cooperate with each other and take coordinated security measures 
against attacks. In addition, even if they could coordinate their 
actions, they would be unlikely to do so in the absence of clear 
incentives for coordination. 

We are interested in scenarios where the number of agents is large. 
As mentioned earlier, 
modeling detailed {\em microscale} interactions among many 
agents in a large network and analyzing ensuing games is difficult;
the number of possible strategy profiles typically increases
exponentially with the number of players and finding the NEs of 
noncooperative games is often challenging even with a moderate number 
of players (the curse of dimensionality).

For analytical tractability, we employ a {\em population game} model
with a continuous action space. 
Population games provide a unified framework and tools for studying
{\em strategic interactions} among {\em a large number of agents} 
under following assumptions \cite{Sandholm}. 
First, the choice of an individual agent 
has very little effect on the payoffs of other agents. 
Second, the payoff of each agent depends only on the {\em 
distribution} of actions chosen by the members of each population.
For a detailed discussion of population games, 
we refer an interested reader to the manuscript by 
Sandholm~\cite{Sandholm}. We will follow the 
language of \cite{Sandholm} throughout the paper. 

Our population game does not explicitly capture the {\em 
link level} interactions between every pair of neighbors in
a fixed dependence graph. Instead, it is a simplification of complicated
reality and only attempts to capture the {\em average}
or {\em mean} behavior of agents with varying degrees. 
A key advantage of this model is that 
it provides a {\em scalable} model that enables us to study the 
{\em aggregate} behavior of the
players, resulting NEs and social optima, and their relation, 
{\em regardless of} the number of agents.

\subsection{Population game} 	\label{subsec:PG}

We assume that the maximum degree among all players in the 
dependence graph is $D_{\max} 
< \infty$. For each $d \in \{1, 2, \ldots, D_{\max}\} =: \cD$, 
$s_d$ denotes the {\em size} or {\em mass} of population 
consisting of players with degree 
$d$, and the population size vector 
${\bf s} := \big( s_d; \ d \in \cD \big)$ tells us the 
sizes of populations with varying degrees. 

The population size $s_d$ does {\em not} necessarily represent the 
{\em number} of agents in population $d$; instead, an implicit 
modeling assumption of a population game is
that each population consists of so many agents that a population 
$d \in \cD$ can be approximated as a {\em continuum} of 
{\em mass} or {\em size} $s_d \in (0, \infty)$. Hence, the
{\em ratios} of population sizes are more important than the
assumed population sizes, as it will be clear. 
\myskip

\noindent
{\bf i. Pure action/strategy space --}
All players have the identical (pure) action space ${\cal A} := 
[I_{\min}, I_{\max}] \subset \R_+ := [0, \infty)$, 
where $0 \leq I_{\min} < I_{\max} < \infty$. 
Each pure action in $\cA$ represents the amount that
an agent invests in security to protect itself. We denote the
set of distributions over $\cA$ by ${\cal P}_\cA$. 
\myskip

\noindent 
{\bf ii. Population states and social state --}
The {\em population state} of population $d$ is given by ${\bf x}_d 
\in {\cal P}_\cA$. In other words, given any Borel subset
${\cal S} \subseteq \cA$, $\bx_d({\cal S})$ tells us the {\em fraction}
of population $d$ whose security investment lies in ${\cal S}$. 
The {\em social state} consists of the population states of all 
populations and is denoted by $\bx = (\bx_d; d \in \cD) 
\in {\cal P}_\cA^{D_{\max}} =: \cX$.  
\myskip

\noindent 
{\bf iii. Cost function --} The cost function of the game is 
determined with the help of a function $C: \cX \times \cD \times 
\cA \times \R_+^{D_{\max}} \to \R$. The 
interpretation is that, when the population size vector
is $\bs$ and the social state is $\bx$, the cost of 
a player with degree $d$ investing $a$ in security is 
equal to $C(\bx, d, a, \bs)$. As we will show shortly, in addition 
to the cost of security investments, our cost function 
also reflects the (expected) losses from attacks. 

As mentioned earlier, we are interested in exploring a possible 
means of improving overall security via internalization of
externalities. Obviously, the externalities produced by players
will depend on the (properties of) dependence graph because
it determines how the security of one agent influences that of
other agents.  In order to capture this, we model 
two different types of attacks players suffer from -- {\em direct} 
and {\em indirect} attacks. 
While the underlying dependence graph does not affect the first 
type of attacks, it influences the latter type, 
thereby allowing us to capture the desired {\em network 
effects} shaped by it. 

{\em a) Direct attacks: }
We assume that attacker(s) launch an attack against each player 
with probability $\tau_A$, independently of other 
players.\footnote{Our model can be altered to capture the 
intensity or frequencies of attacks instead, with appropriate
changes to cost functions of the players.} We call
this a {\em direct} attack. When a player experiences a direct
attack, the realized cost depends on its security investment; 
when a player adopts action $a \in \cA$, it is infected with 
probability $p(a) \in [0, 1]$. Also, each time a player is
infected, it incurs on the average a loss of $L$. 
Hence, the expected
loss due to a single attack for a player with security investment 
of $a$ is $L(a) := L \cdot p(a)$.

{\em b) Indirect attacks: }
Besides the direct attack by a malicious attacker, a player also 
experiences {\em indirect} attacks from its neighbors that have
sustained a successful attack and are infected.  For instance, 
malware that successfully infects a user may scan
the user's address book and either send a malicious email to those 
on the contact list or forward the list to a server that sends out 
malevolent emails to those on the list. 

We assume that an infected agent launches an indirect attack on
each of its immediate neighbors (along the directed edges to
the neighbors) with probability $\beta_{IA} \in (0, 1]$
independently of each other. We call $\beta_{IA}$ indirect attack 
probability (IAP). When a player suffers an indirect attack, 
it is infected with the same probability $p(a)$, where $a$ is
its security investment. Moreover, 
a player infected by an indirect attack can also 
transmit the infection to its neighbors when an infection can 
propagate more than one hop. 

Baryshnikov \cite{Bary} showed that, under some technical 
assumptions, the infection probability (which the author
called {\em security breach probability})
is a {\em log-convex} (hence, strictly convex) 
decreasing function of 
the investments. The basic intuition behind this finding is 
the following: Suppose that there are many independent security 
measures an agent can employ to protect itself
(e.g., installation of security software, traffic monitoring).  
When the agent is free to choose any collection of security 
measures subject to its budget, it should choose the most
effective combination of security measures which minimizes 
its security
breach probability, leading to a diminishing return of 
increasing security investments~\cite{Gordon2015}.   

Based on this finding, we introduce the following
assumption on the infection probability $p(a)$, $a \in \cA$.
A similar assumption was used in other studies
(e.g., \cite{Gordon2015, Zhao2009}). 
\myskip

\begin{assm} 	\label{assm:pa}
The infection probability $p: \cA \to [0, 1]$ 
is {\em continuous, strictly convex and decreasing}. Moreover, it is 
continuously differentiable over $(I_{\min}, I_{\max})$. 
\myskip
\end{assm}

The IAP affects the {\em local} spreading behavior of infections. 
Unfortunately, the dynamics of infection propagation in a network 
depend on the details of
underlying dependence graph, which are difficult to obtain
or model faithfully. In order to skirt this difficulty, instead of
attempting to model the detailed, microscale 
dynamics of infection transmissions
between players, we abstract out the {\em (security) risks} seen by 
players using the (average) number of attacks a 
player sees from a single neighbor as explained below.  

$\bullet$ {\bf{\em Node degree distribution and weighted degree distribution --}}
We denote the mapping that yields the degree distribution of 
populations by ${\bf f}: \R_+^{D_{\max}} \to \Delta_{D_{\max}}$, 
where $\Delta_{D_{\max}}$ denotes the probability simplex 
in $\R^{D_{\max}}$ and 
\beqa
f_d({\bf s}) = \frac{ s_d }{ \sum_{d' \in \cD} s_{d'} }, 
	\ {\bf s} \in \R_+^{D_{\max}} \mbox{ and } d \in \cD, 
		\label{eq:bf}
\eeqa
is the fraction of total population with degree $d$. 
Similarly, define ${\bf w}: \R_+^{D_{\max}} \to \Delta_{D_{\max}}$, 
where 
\beqa
w_d({\bf s}) 
\myeq \frac{d \cdot s_d}{\sum_{d' \in \cD} d' \cdot s_{d'}} \lb
\myeq \frac{ d \cdot f_d(\bs) }{ d_{\avg}(\bs) }, 
	\ {\bf s} \in \R_+^{D_{\max}} \mbox{ and } d \in \cD,  
	\label{eq:bw}
\eeqa
and $d_{\avg}(\bs) := \sum_{d' \in \cD} d' \cdot f_{d'}({\bf s})$
is the average degree of nodes. 
From the above definition, ${\bf w}$ gives us the {\em weighted} 
degree distribution of populations, where
the weights are the degrees. 

It is easy to show
that both ${\bf f}$ and ${\bf w}$ are scale invariant. 
In other words, ${\bf f}({\bf s}) = {\bf f}(\phi \cdot {\bf s})$ 
and ${\bf w}({\bf s}) = {\bf w}( \phi \cdot {\bf s})$ for all 
$\phi > 0$. 
When there is no confusion, we write $\bff$, ${\bf w}$, and 
$d_{\avg}$ in place of $\bff({\bf s})$, ${\bf w}({\bf s})$, 
and $d_{\avg}(\bs)$, respectively. 

We first clarify the role of mapping ${\bf w}$. 
Suppose that we fix a social state ${\bf x} \in \cX$ and  
choose a player at random. 
In addition, assume that the dependence graph
is {\em neutral} \cite{Newman2002, Newman2003}, 
i.e., there are no correlations between the
degrees of two neighbors.  In this case, 
the probability that a randomly picked neighbor 
of the player has degree $d \in \cD$ can be approximated
using $w_d$ because it is proportional to the degree $d$ 
\cite{Callaway}.\footnote{When a network is either assortative
or disassortative, this assumption does not hold. 
The effects of assortativity on security is studied in 
\cite{La_assortativity}.} 
Hence, we can approximate the probability that the neighbor has 
degree $d$ and its security investment belongs to $B \subseteq \cA$ 
using $w_d \cdot x_d(B)$. 

This degree-based model is known as the Chung-Lu
model in the literature~\cite{ChungLu} and has been 
used extensively in other existing studies, e.g., \cite{Gleeson2007, 
Watts2002, Yagan2012}. 
In particular, Watts in his seminal paper~\cite{Watts2002} utilized
a similar degree-based model to study cascades of infection in a
network with a given degree distribution. He demonstrated that the
analytical results he derived using the generating function
method based on this model closely match the numerical results he 
obtained using random graphs.

$\bullet$ {\bf{\em Risk exposure --}}
Based on the above observation, we can model the average 
number of indirect attacks a node experiences from a single 
neighbor, assuming a {\em neutral} dependence graph, 
as follows.
First, we approximate the (expected) total 
number of one-hop indirect attacks from the victims of successful 
{\em direct} attacks to their {\em immediate}
neighbors using
\beqan
\Gamma(\bx ; \bs)
& \myb := & \myb \tau_A \cdot \beta_{IA} \sum_{d \in \cD} 
	\big( d \cdot  s_d \cdot p_{d, \avg}(\bx) \big),  
	\label{eq:Gamma1}
\eeqan
where 
\beqan
p_{d, \avg}(\bx)
& \myb := & \myb \int_{\cA} p(a) \ \bx_d(da)
\eeqan
is the probability that a randomly selected node of degree $d$ 
will be infected when it experiences an attack. Hence, $\tau_A \cdot 
\beta_{IA} \cdot d \cdot p_{d, \avg}(\bx)$ can be taken to be the average 
number of one-hop indirect attacks that nodes of degree $d$
inflict on their immediate neighbors.

Define $\gamma_{\avg}(\bx; \bs)$ to be 
the average number of {\em one-hop} indirect attacks {\em per 
directed edge} or, equivalently, 
the {\em fraction} of directed edges employed
to transmit {\em one-hop} indirect attacks. 
Then, $\gamma_{\avg}(\bx; \bs)$   
is equal to $\Gamma(\bx; \bs)$ divided by the total number
of directed edges in the dependence graph, i.e., 
\beqa
&& \myhb \gamma_{\avg}(\bx; \bs) 
	= \frac{\Gamma(\bx; \bs)}{\sum_{d' \in \cD} d' \cdot s_{d'}}  \lb
\myeq \frac{ \tau_A \cdot \beta_{IA} \sum_{d \in \cD} 
	\big( d \cdot f_d(\bs) \cdot p_{d, \avg}(\bx) \big)}
	{d_{\avg}(\bs)}
		\label{eq:gamma_avg1} \\
\myeq \tau_A \cdot \beta_{IA}
	\sum_{d \in \cD} \big( w_d(\bs) \cdot p_{d, \avg}(\bx) \big),  
	\nonumber
\eeqa
where the second and third equalities follow from (\ref{eq:bf}) 
and (\ref{eq:bw}). Since each directed edge corresponds
to a neighbor of some node, $\gamma_{\avg}(\bx; \bs)$ is also 
the {\em likelihood that a node will see a one-hop indirect attack 
from a neighbor on a randomly selected directed edge}.

When infections can propagate more than one-hop, i.e., beyond
immediate neighbors, the manner in which they can spread through
the network will depend on the {\em detailed structure} of the 
dependence graph. As mentioned earlier, it is difficult to model the 
dynamics of propagation accurately and makes 
a mathematical analysis challenging, if
possible at all. For this reason, we do not attempt to model
the detailed dynamics of infection transmissions. 
However, it is reasonable to expect that, even in general
settings, the average number of {\em indirect} attacks a node 
sees from a single neighbor (including both one-hop and multi-hop
indirect attacks) is increasing in $\gamma_{\avg}(\bx; \bs)$. 
This is formally stated by the following assumption. 
\myskip

\begin{assm}	\label{assm:exposure1}
The average number of indirect attacks that a node experiences
from a {\em single} neighbor, which we denote by $e(\bx; \bs)$,
is equal to $g(\gamma_{\avg}(\bx; \bs))$ for some function 
$g: \R_+ \to \R_+$ with $g(0) = 0$. The function $g$ is continuous, 
strictly increasing and differentiable over $(0, \infty)$. 
\myskip
\end{assm}

We call $e(\bx; \bs)$ the {\em risk exposure} at social state 
$\bx$ and employ it to quantify and compare the overall 
risk perceived by a node with a {\em fixed} degree, say 
$d \in \cD$, at different social states or under different 
network settings. In other words, we use it as a measure 
of {\em local} network security {\em from 
the viewpoint of a node with a fixed degree}.\footnote{To 
the best of our knowledge, one of challenges to studying 
network-level security is that there are no standard metrics 
experts agree on for quantifying network security.}
It is clear from (\ref{eq:gamma_avg1}) that $e(\bx; 
\bs)$ reflects the node degree distribution ${\bf f}(\bs)$.  
\myskip

{\bf Example:} Power function of $\gamma_{\avg}(\bx; \bs)$
-- In this case, the risk exposure is equal to 
\beqa
\myb e({\bf x}; \bs) 
\myeq \kappa \cdot \gamma_{\avg}(\bx; \bs)^b \lb
\myeq \kappa^+  \Big( \sum_{d \in \cD}  w_d(\bs) \cdot 
		p_{d, \avg}(\bx) \Big)^b 
	\label{eq:Exposure}
\eeqa
for some $\kappa, b > 0$, 
where $\kappa^+ :=  \kappa (\tau_A \cdot \beta_{IA})^b$.
Note that $\sum_{d \in \cD} w_d(\bs) \cdot p_{d, \avg}(\bs)$
is the probability that a randomly chosen neighbor of a node
(with degree distribution $\bw(\bs)$ as explained before)
will be infected by a single attack. Hence, it measures how
{\em vulnerable} a neighboring node is to an attack on 
the average.

The constant $\kappa$ can be used to capture how far
an infection spreads (e.g., the number of hops) 
on the average. In other words, the larger $\kappa$
is, the farther an infection disseminates through 
a network. On the other hand, the parameter $b$ determines
how sensitive the risk exposure is to the likelihood of 
a neighbor falling victim to an attack, i.e., the vulnerability
of a neighbor mentioned in the previous paragraph.

We assume that the total cost of a player due to multiple successful 
attacks it suffers is additive. 
The additivity of costs from different attacks is reasonable in 
many scenarios, including the earlier examples of the spread
of malware 
and corporate networks; each time a user is infected or its ID 
is stolen, the user will need to spend time and incur expenses 
to deal with the problem. Similarly, when a corporate network 
is breached, besides any financial losses or legal expenses, the 
network operator will need to assess the damages and take 
corrective measures.

Based on this assumption, we adopt the following cost function for our 
population game: for any given social state ${\bf x} \in \cX$, the 
cost of a node with degree $d \in \cD$ investing $a \in \cA$ in 
security is equal to 
\beqa
\myhb C({\bf x}, d, a, \bs) 
\myeq \left( \tau_A + d \cdot e({\bf x}; \bs) \right) L(a) 
	+ a. 
	\label{eq:Cost}
\eeqa
Note that $\tau_A + d \cdot e(\bx; \bs)$ is the expected number 
of attacks that a node of degree $d$ 
experiences, including both direct and indirect attacks. 

We focus on Nash equilibria (NEs) of population games as an 
approximation to nodes' behavior in practice. 
For every $d \in \cD$, define a mapping $A^{\opt}_d: \cX
\times \R_+^{D_{\max}} \to \mathcal{B}(\cA)$, where 
$\mathcal{B}(\cA)$ is the set of Borel subsets of $\cA$ and 
\beqan
A^{\opt}_d(\bx, \bs) 
:= \left\{ a \in \cA \ | \ C(\bx, d , a, \bs) 
	= \inf_{a' \in \cA} C(\bx, d, a', \bs) \right\}. 
\eeqan
For a fixed population size $\bs$, 
a social state ${\bf x}^\star$ is an NE if  $\bx^\star_d(
A^{\opt}_d(\bx^\star, \bs)) = 1$ for all $d \in \cD$. 
In other words, (almost) every player adopts a best response. 

Key questions we are interested in exploring with help of the 
population game model include:
{\em (i) Is there any structural relation between an NE of a 
population game and a social optimum? (ii) If so, what is the
relation and how does it depend on system parameters? 
(iii) In addition, how can we take advantage 
of the relation in order to improve network security?}
We shall offer some answers to these important questions in 
the following sections.

\section{Preliminaries}
	\label{sec:Preliminary}
	
Before we proceed, let us point out a useful observation. From 
(\ref{eq:gamma_avg1}) - (\ref{eq:Cost}), it is clear that 
the cost function is identical for two population size vectors 
$\bs^1$ and $\bs^2$ with the same node degree distribution, 
i.e., ${\bf f}(\bs^1) = {\bf f}(\bs^2)$.  This scale invariance 
property of the cost function implies that the set of NEs is 
identical for both population size vectors. It is also consistent
with an earlier comment that the ratios of population sizes
are more relevant than the absolute population sizes. As a result, 
it suffices to consider population size vectors 
whose sum is equal to one, 
i.e., $\sum_{d \in \cD} s_d = 1$. For this reason, without loss
of generality we impose the
following assumption in the remainder of the paper. 
\\ \vspace{-0.1in}

\begin{assm} 	\label{assm:0}
The population size vectors are normalized so that
the total population size is equal to one. 
\\ \vspace{-0.1in}
\end{assm}

Keep in mind that, under Assumption~\ref{assm:0}, the 
node degree distribution ${\bf f}(\bs)$ is equal to the
population size vector ${\bf s}$, i.e., ${\bf f}(\bs) = \bs$.

For each $r \in \R_+$, let $I^{\opt}(r)$
be the set of optimal investments for a player when it sees 
$r$ expected attacks. In other words, 
\beqan
I^{\opt}(r) 
\myeq \arg \min_{a \in \cA} \big( r \cdot L(a)  
	+ a \big). 
\eeqan
Under Assumption~\ref{assm:pa}, the 
optimal investment is unique, i.e., $I^{\opt}(r)$ is a singleton 
for all $r \in \R_+$. Hence, we can view $I^{\opt} : \R_+ \to \cA$ 
as a mapping that tells us the optimal security investment that a player 
should choose as a function of the total risk it perceives. 
This means that, at an NE $\bx^\star$, the population state 
$\bx^\star_d$ is concentrated on a single point, i.e., 
$\bx^\star_d( \{I^{\opt}( \tau_A + d  \cdot e(\bx^\star; \bs) )\} ) 
= 1$ for all $d \in \cD$. Moreover, one can show from 
the assumed strict convexity and continuous differentiability
of infection probability $p$ 
(Assumption~\ref{assm:pa}) that the optimal investment 
$I^{\opt}(r)$ is nondecreasing in $r$ and, if $I_{\min} 
< I^{\opt}(r') < I_{\max}$ for some $r' \in \R_+$, 
then $I^{\opt}(r) > I^{\opt}(r')$ 
(resp. $I^{\opt}(r) < I^{\opt}(r')$) for all $r > r'$
(resp. $r < r'$). 

Define the mapping $p^\star: \R_+ \to [0, 1]$ to be the 
composition of $I^{\opt}: \R_+ \to \cA$ and $p: \cA \to [0, 1]$, 
i.e., $p^\star(r) = p \circ I^{\opt}(r)$. Then, together
with the assumption that $p$ is strictly decreasing, 
the above observation means that $p^\star$ is nonincreasing. 
\myskip

\begin{coro}	\label{coro:mono}
If $0 \leq r_1 < r_2 < \infty$, we  have $p^\star(r_1) \geq p^\star(r_2)$. 
Furthermore, if $I_{\min} < I^{\opt}(r_1) < I_{\max}$, the 
inequality is strict, i.e., $p^\star(r_1) > p^\star(r_2)$.
\myskip
\end{coro}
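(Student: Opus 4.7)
The plan is to reduce this corollary to two facts that are essentially already in hand just before its statement: (i) the optimal-investment map $I^{\opt}$ is nondecreasing on $\R_+$, with a strict version holding whenever the argument sits in the interior $(I_{\min}, I_{\max})$; and (ii) the infection probability $p$ is strictly decreasing by Assumption~\ref{assm:pa}. Composing these two monotonicity statements — one in the natural direction, the other reversing it — yields both halves of the corollary immediately.

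For the weak inequality, I take $0 \leq r_1 < r_2 < \infty$ and invoke the nondecreasing property of $I^{\opt}$ to get $I^{\opt}(r_1) \leq I^{\opt}(r_2)$; since $p$ is decreasing, this gives $p^\star(r_1) = p(I^{\opt}(r_1)) \geq p(I^{\opt}(r_2)) = p^\star(r_2)$. For the strict inequality, the interior hypothesis $I_{\min} < I^{\opt}(r_1) < I_{\max}$ activates the strict version of monotonicity (the parenthetical ``resp.'' clause in the paragraph above the corollary), giving $I^{\opt}(r_2) > I^{\opt}(r_1)$, and strict monotonicity of $p$ then promotes $\geq$ to $>$.

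The only place any real work is needed is in verifying the (strict) monotonicity of $I^{\opt}$ on the interior, which I would establish via the first-order condition: whenever $I^{\opt}(r) \in (I_{\min}, I_{\max})$, the objective $r \cdot L \cdot p(a) + a$ has the stationarity condition $-r \cdot L \cdot p'(I^{\opt}(r)) = 1$. Strict convexity of $p$ makes $p'$ strictly increasing (and negative), so $-p'$ is strictly decreasing and positive; hence $I^{\opt}(r)$ is uniquely pinned down by $-p'(I^{\opt}(r)) = 1/(rL)$, and since the right-hand side is strictly decreasing in $r$ while $-p'$ is strictly decreasing in its argument, $I^{\opt}$ must be strictly increasing. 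Boundary cases where $I^{\opt}(r_1) \in \{I_{\min}, I_{\max}\}$ are handled by a direct comparison of the objective values, but these are excluded in the strict part of the claim. I do not expect a genuine obstacle: once the monotonicity properties of $I^{\opt}$ stated just above the corollary are accepted, what remains is pure bookkeeping.
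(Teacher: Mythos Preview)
Your proposal is correct and follows exactly the approach the paper takes: the corollary is stated immediately after the paragraph establishing that $I^{\opt}$ is nondecreasing (strictly so from interior points) and that $p$ is strictly decreasing, and the paper treats it as an immediate consequence of composing these two monotonicity facts. Your first-order-condition argument for the strict monotonicity of $I^{\opt}$ simply fills in details the paper only sketches.
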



\paragraph{Existence of a pure-strategy NE of a population 
game} 
Throughout the paper, we are often interested in 
cases in which the social state is degenerate, i.e., all 
players with the identical degree adopt the same action. We denote
the action chosen by population $d \in \cD$ by $a_d$, and 
refer to $\ba := (a_d; \ d \in \cD) \in \cA^{D_{\max}}$ as 
a {\em pure strategy profile}. 

With a little abuse of notation, we denote the risk 
exposure when a pure strategy profile $\ba$
is employed by 
\beqa
e(\ba; \bs)
& \myb := & \myb  g\Big( \tau_A \ \beta_{IA} 
	\sum_{d \in \cD} \big( w_d(\bs) 
	\cdot p(a_d) \big) \Big).
	\label{eq:ExposurePure}
\eeqa 
Recall that $\sum_{d \in \cD} w_d(\bs) \cdot p(a_d)
= \bw(\bs)^T {\bf p}(\ba)$, where ${\bf p}(\ba)
= (p(a_d); \ d \in \cD)$, is the vulnerability of a 
neighbor, i.e.,  the likelihood that a 
randomly chosen neighbor of a node will be infected
when attacked. We shall denote 
$\bw(\bs)^T {\bf p}(\ba)$ by $\rho(\ba; \bs)$
in the rest of paper, and rewrite $g(\tau_A \ \beta_{IA}
\ \rho(\ba; \bs))$ as $g^+(\rho(\ba; \bs))$. In other
words, we define the mapping $g^+: \R_+ \to \R_+$  
to be $g^+(z) = g(\tau_A \ \beta_{IA} \ z)$ for all 
$z \in \R_+$, and it tells us the risk exposure as a 
function of the neighbor vulnerability 
$\rho(\ba; \bs) = \bw(\bs)^T {\bf p}(\ba)$. 
\myskip

\begin{defn}	\label{defn:PSNE}
A pure strategy profile $\ba' \in \cA^{D_{\max}}$ is said
to be a pure-strategy NE if, for all 
$d \in \cD$,
\beqan 
a'_d = \arg \min_{a \in \cA} \big( (\tau_A + d \cdot 
	e(\ba'; \bs)) L(a) + a \big). 
\eeqan
\end{defn}

\begin{lemma}	\label{lemma:existence}
For every population size vector $\bs \in \R_+^{D_{\max}}$, 
there exists a pure-strategy NE of the corresponding
population game. 
\end{lemma}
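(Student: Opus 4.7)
The plan is to reduce the fixed-point problem from the (multi-dimensional) pure strategy profile $\ba \in \cA^{D_{\max}}$ to a single scalar, namely the neighbor vulnerability $\rho(\ba;\bs) = \bw(\bs)^T \bff{p}(\ba) \in [0,1]$, and then apply a one-dimensional continuous fixed-point argument. The observation that drives the reduction is that, at any pure strategy profile, the cost to a population-$d$ player depends on the profile only through the scalar $\rho(\ba;\bs)$: by \eqref{eq:ExposurePure} the risk exposure can be written as $e(\ba;\bs)=g^+(\rho(\ba;\bs))$, and hence the unique best response of a degree-$d$ player is $I^{\opt}(\tau_A + d \cdot g^+(\rho(\ba;\bs)))$.

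First, I would establish that the best-response map $r \mapsto I^{\opt}(r)$ is continuous on $\R_+$. Under Assumption~\ref{assm:pa}, the map $a \mapsto r\cdot L(a)+a$ is strictly convex and continuous in $a$ and jointly continuous in $(a,r)$ on the compact set $\cA$, so Berge's maximum theorem together with the singleton property of $I^{\opt}(r)$ yields continuity of $I^{\opt}$, and therefore of $p^\star = p \circ I^{\opt}$. Next, I would define
\[
\Phi(\rho) \;:=\; \sum_{d \in \cD} w_d(\bs)\, p^\star\!\big(\tau_A + d\cdot g^+(\rho)\big),\qquad \rho \in [0,1].
\]
Since $g^+$ is continuous (Assumption~\ref{assm:exposure1}) and each $p^\star(\cdot)$ is continuous and takes values in $[0,1]$, the function $\Phi$ is a continuous self-map of the compact interval $[0,1]$ (note $\sum_d w_d(\bs)=1$).

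By Brouwer's fixed-point theorem (or equivalently the intermediate value theorem applied to $\Phi(\rho)-\rho$), there exists $\rho^\star \in [0,1]$ with $\Phi(\rho^\star)=\rho^\star$. I would then set
\[
a'_d \;:=\; I^{\opt}\!\big(\tau_A + d\cdot g^+(\rho^\star)\big),\qquad d \in \cD,
\]
and verify that $\rho(\ba';\bs)=\sum_d w_d(\bs)\,p(a'_d)=\Phi(\rho^\star)=\rho^\star$. Consequently $e(\ba';\bs)=g^+(\rho^\star)$, so each $a'_d$ minimizes $(\tau_A + d\cdot e(\ba';\bs))L(a)+a$ over $\cA$, which is exactly the condition of Definition~\ref{defn:PSNE}.

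The main obstacle is the continuity of $I^{\opt}$: since the strict convexity of $p$ is only posited on $\cA$ and the continuous differentiability only on the open interval $(I_{\min},I_{\max})$, one must handle the boundary cases where the minimizer lands at $I_{\min}$ or $I_{\max}$ carefully. This is settled by the strict-convexity-based argmin-continuity argument (any subsequential limit of minimizers must itself be the unique minimizer at the limiting parameter), so no differentiability at the endpoints is actually required. Everything else is routine continuity bookkeeping plus a one-dimensional Brouwer application.
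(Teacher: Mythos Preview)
Your argument is correct. The paper, however, takes a somewhat different route: it applies Brouwer's fixed-point theorem directly in $D_{\max}$ dimensions to the best-response map $H:\cA^{D_{\max}}\to\cA^{D_{\max}}$ with $H_d(\ba)=I^{\opt}(\tau_A+d\cdot e(\ba;\bs))$, using that $\cA^{D_{\max}}$ is compact and convex and that $H$ is continuous. Your approach instead exploits the \emph{aggregative} structure of the game --- each player's cost depends on the profile $\ba$ only through the scalar $\rho(\ba;\bs)$ --- to collapse the problem to a continuous self-map $\Phi$ of $[0,1]$, where the intermediate value theorem already suffices. The paper's version is slightly quicker to state, whereas yours is more elementary (no multi-dimensional Brouwer needed), makes the one-parameter structure explicit, and in fact this same reduction underlies the uniqueness argument the paper gives later in Theorem~\ref{thm:unique}. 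Your careful treatment of the continuity of $I^{\opt}$ at the boundary via the strict-convexity/argmin argument is also a point the paper glosses over by simply asserting continuity.
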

\begin{proof}
A proof of the lemma is provided in Appendix
\ref{appen:existence}. 
\end{proof}

As discussed earlier, under Assumption~\ref{assm:pa}, for any 
NE of a population game, say $\bx^{NE}$, there exists 
a pure strategy profile $\ba^+$ such that $\bx^{NE}(\{\ba^+\})
= 1$; once the risk exposure 
$e(\bx^{NE}; \bs)$ is fixed at the NE, every population has
a unique optimal investment that minimizes its cost given by
(\ref{eq:Cost}) as explained before. Obviously, $\ba^+$ is
a pure-strategy NE by definition. This tells us that all NEs 
of population games are pure-strategy NEs.

\section{Main Results: Internalization of Externalities}	
\label{sec:Internalization}

As mentioned in Section~\ref{sec:Introduction}, 
an important question in IDS
is how one can encourage selfish agents to make adequate
investments in security in order to improve overall
security. In other words, what types of incentive
mechanisms can be employed, for instance, with the help of
regulatory policies, to {\em incentivize} selfish agents so 
that they would invest more in security than they would otherwise
at an NE?
In this section, we offer a partial answer to this question
by exploring how we may be able to {\em internalize} 
externalities~\cite{Varian_Microeconomics}.

To this end, we first establish the uniqueness of the
(pure-strategy) NE. Then, we illustrate an intriguing
relation between the set of local minimizers of 
social cost and the set of pure-strategy NEs of a 
{\em related} population game. 
As we will demonstrate, 
the latter finding allows us to (i) establish the uniqueness 
of a social optimum under a mild technical condition 
{\em without requiring} the convexity of social 
cost and (ii) compare the risk 
exposures at the unique pure-strategy NE of the population 
game and a (local) minimizer of social cost. 
Our finding confirms that the 
selfish nature of players leads to under-investments in 
security (due to positive externalities from security
investments). 
Finally, based on the aforementioned relation between 
the local minimizers of social cost and the pure-strategy NEs 
of the related population game, we suggest a possible 
means of internalizing the externalities caused by 
players. 

Before presenting our results, we first point out an 
obvious consequence of Corollary~\ref{coro:mono}: 
the optimal investment by a player at an NE
is non-decreasing in its degree. This is because
the average number of attacks seen by a node is strictly
increasing in its degree according to the cost function 
in (\ref{eq:Cost}).

\subsection{Uniqueness of Nash equilibria of population games}	
\label{subsec:NE}

\begin{theorem} \label{thm:unique}
For a fixed population size vector $\bs \in \R_+^{D_{\max}}$, 
there is a unique (pure-strategy) NE of the population game. 
\myskip
\end{theorem}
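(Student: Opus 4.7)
The plan is to reduce the uniqueness of pure-strategy NEs to the uniqueness of a fixed point of a one-dimensional scalar equation involving the neighbor vulnerability $\rho(\mathbf{a};\mathbf{s}) = \mathbf{w}(\mathbf{s})^T \mathbf{p}(\mathbf{a})$. Recall that Lemma \ref{lemma:existence} has already given us existence, so we only need to rule out multiple NEs.

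First, I would observe that, by Assumption \ref{assm:pa}, the best-response map $I^{\mathrm{opt}}(\cdot)$ is single-valued. Therefore, given any pure-strategy NE $\mathbf{a}^\star$, the risk exposure $e(\mathbf{a}^\star;\mathbf{s}) = g^+(\rho(\mathbf{a}^\star;\mathbf{s}))$ completely pins down each coordinate through
\begin{equation*}
a_d^\star \;=\; I^{\mathrm{opt}}\bigl(\tau_A + d\cdot g^+(\rho^\star)\bigr), \qquad d \in \mathcal{D},
\end{equation*}
where $\rho^\star := \rho(\mathbf{a}^\star;\mathbf{s})$. Consequently, if I can show that $\rho^\star$ is unique, then $\mathbf{a}^\star$ is uniquely determined.

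Substituting the NE characterization back into the definition of $\rho$ yields the fixed-point equation $\rho^\star = \Phi(\rho^\star)$, where
\begin{equation*}
\Phi(\rho) \;:=\; \sum_{d \in \mathcal{D}} w_d(\mathbf{s})\, p^\star\bigl(\tau_A + d \cdot g^+(\rho)\bigr).
\end{equation*}
Here $p^\star = p \circ I^{\mathrm{opt}}$ is the composition discussed just before Corollary \ref{coro:mono}. The key monotonicity step is: by Assumption \ref{assm:exposure1} the map $g^+$ is strictly increasing, so for each fixed $d \in \mathcal{D}$ the argument $\tau_A + d\cdot g^+(\rho)$ is nondecreasing (in fact strictly increasing) in $\rho$, and by Corollary \ref{coro:mono} the outer map $p^\star(\cdot)$ is nonincreasing. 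Since $w_d(\mathbf{s}) \geq 0$, the finite convex combination $\Phi$ is nonincreasing in $\rho$.

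The uniqueness then follows from the elementary observation that $\rho \mapsto \rho - \Phi(\rho)$ is the sum of a strictly increasing function (the identity) and a nondecreasing function ($-\Phi$), hence strictly increasing; so it has at most one zero on $\mathbb{R}_+$. Combining with Lemma \ref{lemma:existence} delivers a unique $\rho^\star$ and therefore a unique pure-strategy NE $\mathbf{a}^\star$. Finally, because Assumption \ref{assm:pa} forces every (possibly mixed) NE $\mathbf{x}^{NE}$ to be concentrated on the singleton $\{I^{\mathrm{opt}}(\tau_A + d\cdot e(\mathbf{x}^{NE};\mathbf{s}))\}$ for each population $d$, as already noted in the discussion following Lemma \ref{lemma:existence}, uniqueness in the pure-strategy class actually upgrades to uniqueness among all NEs. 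I do not anticipate a serious obstacle here; the only thing one must watch is that $p^\star$ is merely nonincreasing rather than strictly decreasing (when boundary investments $I_{\min}$ or $I_{\max}$ are attained), but this is harmless because strict monotonicity of the identity alone suffices to yield at most one solution.
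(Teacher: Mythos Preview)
Your argument is correct and is essentially the same as the paper's: both reduce uniqueness to the scalar quantity governing the NE (you work with the neighbor vulnerability $\rho$, the paper works with the risk exposure $e=g^+(\rho)$, which is equivalent since $g^+$ is strictly increasing) and then invoke the monotonicity of $p^\star$ from Corollary~\ref{coro:mono}. The only cosmetic difference is packaging: you make the one-dimensional fixed-point map $\Phi$ explicit and observe that $\rho\mapsto\rho-\Phi(\rho)$ is strictly increasing, whereas the paper argues by contradiction that two NEs cannot have distinct exposures (if $e(\mathbf a^1;\mathbf s)<e(\mathbf a^2;\mathbf s)$ then Corollary~\ref{coro:mono} forces $p(a^1_d)\ge p(a^2_d)$ for all $d$, whence $e(\mathbf a^1;\mathbf s)\ge e(\mathbf a^2;\mathbf s)$). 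Your formulation has the mild advantage of making the scalar structure transparent and of handling the boundary case ($p^\star$ only nonincreasing) cleanly via strict monotonicity of the identity, exactly as you note.
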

\begin{proof}
A proof can be found in Appendix~\ref{appen:unique}. 
\end{proof}

Since there exists a unique NE of a population game 
and we measure the (local) security seen by the players
using the risk exposure, we can compare the security at 
the NEs under different settings or dependence graphs
and also to that of social optimum. Furthermore, we only
need to consider pure-strategy NEs for our 
study, which simplifies the analysis somewhat. 
For notational convenience, we denote the unique
pure-strategy NE (for a given population
size vector $\bs$) by $\ba^{NE}(\bs)$ or simply
by $\ba^{NE}$ when there is no confusion.

\subsection{Social optima}
	\label{subsec:SO}

We define the overall social cost at social state $\bx \in \cX$ 
to be the aggregate cost of all players, i.e., 
the sum of (i) expected losses from attacks and (ii) the 
costs of security investments by players, which is given by 
\beqa
C_T(\bx, \bs) 
\myeq \sum_{d \in \cD} s_d \left( \int_{\cA} C(\bx, d, a, \bs) 
	\ x_d(da) \right).
	\label{eq:GC_SocialCost} 
\eeqa
The goal of the social player (SP), e.g., policy designers, is 
to minimize the social cost in (\ref{eq:GC_SocialCost}). 

In general, the SP could choose a distribution over 
$\cA^{D_{\max}}$ to minimize the social cost. 
In our study, however, we restrict its 
action space to $\cA^{D_{\max}}$. In other words, the SP is allowed 
to choose only a single investment amount for each population. We 
make this assumption for the following two reasons: first, 
as shown in the previous subsection, for a fixed population size 
vector $\bs$, there exists a unique 
NE, namely pure-strategy NE $\ba^{NE}$. 
Secondly, perhaps more importantly, from the 
perspective of designing a policy, 
a sound policy should have similar requirements for
nodes that are alike (in our case, nodes with the same degree).  

Given a population size vector $\bs$ and a pure strategy profile 
$\ba \in \cA^{D_{\max}}$, we denote the social cost by
\beqa
\tilde{C}_T(\ba, \bs)
\myeq \sum_{d \in \cD} s_d \cdot \tilde{C}(\ba, d, \bs),
	\label{eq:SocialCost}
\eeqa
where $\tilde{C}: \cA^{D_{\max}} \times \cD \times \R_+^{D_{\max}}
\to \R$ with
\beqa
\tilde{C}(\ba, d, \bs)
\myeq (\tau_A + d \cdot e(\ba; \bs)) L(a_d) + a_d. 
	\label{eq:CostPure}
\eeqa
Then, the SP is interested in solving the following constrained 
optimization problem.
\\ \vspace{-0.1in}

{\bf SP-OPT PROBLEM:}
\beqan
\min_{\ba \in \cA^{D_{\max}}} \ \tilde{C}_T(\ba, \bs)
\eeqan
Generally, without imposing additional conditions, 
the SP-OPT PROBLEM need not be convex and its
minimizer is not guaranteed to be unique; this is because
$e(\ba; \bs) = g^+(\rho(\ba; \bs))$ 
is not assumed to be convex in $\rho(\ba; \bs)
= {\bf w}(\bs)^T {\bf p}(\ba)$.

Let $\cA_{LM}(\bs)$ be the set 
of {\em local} minimizers of social cost. It turns out
that there is an interesting relation between $\cA_{LM}(\bs)$
and the set of pure-strategy NEs of a closely related population 
game with a modified cost function.

\subsection{Related population game with a modified cost function}
	\label{subsec:Relation}

Consider the following modified population game with a slightly
different cost function $\hat{C}: {\cal A}^{D_{\max}} \times \cD 
\times \R_+^{D_{\max}} \to \R$, where
\beqa
&& \myhb \hat{C}(\ba, d, \bs)
		\label{eq:CostMod} \\
\myeq \big( \tau_A + d ( e(\ba; \bs)
	+ \dot{g}^+(\rho(\ba; \bs)) \ \rho(\ba; \bs) ) \big) L(a_d) 
		+ a_d \lb
\myeq \big( \tau_A + d \ \vartheta(\rho(\ba; \bs)) \big) L(a_d) 
	+ a_d \nonumber
\eeqa
for all $d \in \cD$ and $\ba \in \cA^{D_{\max}}$, 
and $\vartheta: \R_+ \to \R_+$ with $\vartheta(z) = 
g^+(z) + \dot{g}^+(z) \cdot z$. 
The difference between the cost function in (\ref{eq:CostPure}) 
and the above modified cost function in (\ref{eq:CostMod}) 
is that the number of indirect attacks seen from a single 
neighbor is scaled up from $e(\ba; \bs)$ to $\vartheta(\rho(\ba; \bs))
= e(\ba; \bs) + \dot{g}^+(\rho(\ba; \bs)) \rho(\ba; \bs) 
\geq e(\ba; \bs)$. Clearly, 
providing the additional information regarding
$\dot{g}^+(\rho(\ba; \bs)) \rho(\ba; \bs)$ may be problematic
in practice. We will revisit this issue shortly. 

Even when $g^+$ is strictly increasing, $\vartheta$ 
might not be strictly increasing. For example, 
suppose $g^+(z) = 1 - e^{-z}$, $z \in \R_+$. It is clear
that $g^+$ is strictly increasing. However, 
$\dot{\vartheta}(z) = 2 \dot{g}^+(z) + z \cdot \ddot{g}^+(z) 
= (2 - z) e^{-z}$. Thus, $\dot{\vartheta}(z) < 0$
when $z > 2$, and $\vartheta(z)$ is not monotonically 
increasing in $z$. 
As a result, unlike in the original population game, 
a (pure-strategy) NE of the modified population game is not 
guaranteed to be unique. We denote by $\cA_{NE}^{mc}(\bs)$ the 
set of pure-strategy NEs of this modified population game.
\\ \vspace{-0.1in}

\begin{theorem}	\label{thm:relation1}
Any local minimizer of social cost is a pure-strategy
NE of the modified population game. Therefore, 
\beqan
\cA_{LM}(\bs) \subseteq \cA_{NE}^{mc}(\bs).
\eeqan
\end{theorem}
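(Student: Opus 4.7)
The plan is to verify the modified-game equilibrium condition coordinate-by-coordinate using the standard first-order optimality condition for local minimizers of social cost. Fix $\ba^* \in \cA_{LM}(\bs)$ and $d \in \cD$, and consider the one-dimensional restriction
\[
F_d(a) := \tilde{C}_T\bigl((a^*_1,\ldots,a^*_{d-1},a,a^*_{d+1},\ldots,a^*_{D_{\max}}),\bs\bigr),\qquad a \in [I_{\min},I_{\max}].
\]
Since $\ba^*$ locally minimizes $\tilde{C}_T$ on $\cA^{D_{\max}}$, $a^*_d$ locally minimizes $F_d$ on $[I_{\min},I_{\max}]$.

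Next I would compute $F_d'(a^*_d)$. Writing $\rho^* := \rho(\ba^*;\bs)$, the variable $a$ enters $F_d$ both directly through the $d$-th summand $s_d[(\tau_A + d\,g^+(\rho))L(a) + a]$ and indirectly through $\rho$, which appears inside $g^+(\rho)$ in every summand. A chain-rule computation gives
\[
F_d'(a^*_d) = s_d\bigl[(\tau_A + d\,g^+(\rho^*))\,L'(a^*_d) + 1\bigr] + \dot{g}^+(\rho^*)\,w_d(\bs)\,p'(a^*_d)\sum_{d' \in \cD} s_{d'}\,d'\,L(a^*_{d'}).
\]
Using $L(a) = L\cdot p(a)$ (so $L'(a) = L\,p'(a)$), the Chung-Lu identity $w_d(\bs)\,d_{\avg}(\bs) = d\,s_d$ (valid under Assumption~\ref{assm:0}), and $\sum_{d'} s_{d'}\,d'\,L(a^*_{d'}) = L\,d_{\avg}(\bs)\,\rho^*$, the cross term telescopes into $d\,s_d\,\dot{g}^+(\rho^*)\,\rho^*\,L'(a^*_d)$, yielding
\[
F_d'(a^*_d) = s_d\bigl[(\tau_A + d\,\vartheta(\rho^*))\,L'(a^*_d) + 1\bigr],
\]
with $\vartheta(\rho^*) = g^+(\rho^*) + \dot{g}^+(\rho^*)\,\rho^*$ as in the definition of $\hat{C}$ in (\ref{eq:CostMod}). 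The bracketed factor is exactly the derivative at $a = a^*_d$ of $a \mapsto \hat{C}((a,\ba^*_{-d}),d,\bs)$ \emph{when $\rho$ is held fixed at $\rho^*$}, which is the ``treat the aggregate as given'' convention underlying Definition~\ref{defn:PSNE} applied to $\hat{C}$.

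Local minimality of $a^*_d$ for $F_d$ supplies the standard first-order condition on $[I_{\min},I_{\max}]$: equality in the interior, and an appropriate-sign inequality at either boundary (via one-sided derivatives, which exist by convexity of $p$). Dividing by $s_d > 0$, the identical condition passes to the derivative of $a \mapsto (\tau_A + d\,\vartheta(\rho^*))\,L(a) + a$ at $a = a^*_d$. Since $g$ is nondecreasing with $g(0) = 0$, $\vartheta(\rho^*) \geq 0$, so $\tau_A + d\,\vartheta(\rho^*) > 0$; combined with strict convexity of $L = L\cdot p$ (Assumption~\ref{assm:pa}), this function is strictly convex on $[I_{\min},I_{\max}]$, so its first-order (or boundary) condition uniquely identifies a global minimizer, forcing $a^*_d$ to be that minimizer. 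As $d \in \cD$ was arbitrary, $\ba^*$ satisfies the NE condition for the modified population game, giving $\cA_{LM}(\bs) \subseteq \cA_{NE}^{mc}(\bs)$.

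The main obstacle is the algebraic reshaping in the second paragraph that collapses the sum of cross-derivatives into the single Pigouvian correction $d\,\dot{g}^+(\rho^*)\,\rho^*\,L'(a^*_d)$; this is precisely where the Chung-Lu weighting $w_d(\bs) = d\,f_d(\bs)/d_{\avg}(\bs)$ earns its keep and exposes $\vartheta - g^+$ as the externality correction embedded in $\hat{C}$. Minor technical nuisances are the boundary case $a^*_d \in \{I_{\min},I_{\max}\}$ (handled by one-sided FOCs) and the degenerate case $\rho^* = 0$, which requires $p$ to vanish on the support of every population and is ruled out under any mild non-degeneracy of the infection probability.
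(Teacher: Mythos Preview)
Your proposal is correct and is essentially the same argument as the paper's proof: both compute the partial derivative $\partial\tilde C_T/\partial a_d$, use the identity $w_d(\bs)\,d_{\avg}(\bs)=d\,s_d$ (valid under Assumption~\ref{assm:0}) together with $\sum_{d'}s_{d'}d'\,p(a^*_{d'})=d_{\avg}(\bs)\,\rho^*$ to collapse the cross term into $s_d\,d\,\dot g^+(\rho^*)\rho^*\,L\,\dot p(a^*_d)$, and then observe that the resulting first-order condition (interior or boundary, after dividing by $s_d$) is exactly the necessary-and-sufficient KKT condition for the strictly convex best-response problem in the modified game. The only cosmetic difference is that the paper phrases the boundary analysis via KKT multipliers $\lambda_d,\mu_d$ whereas you use one-sided derivatives.
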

\begin{proof}
A proof is provided in Appendix~\ref{appen:relation1}. 
\end{proof}

The theorem leads to a following corollary, which 
affirms that free riding by some players at NEs causes
a degradation in network security, which is measured by 
the risk exposure in this paper. 
\myskip

\begin{coro} 	\label{thm:GC_NI_comparison}
Fix the population size vector $\bs$.  Let $\ba^{NE} = 
\ba^{NE}(\bs)$ and $\ba^\star \in \cA_{LM}(\bs)$ be any
local minimizer of social cost. Then, 
$e( \ba^\star; \bs ) \leq e(\ba^{NE}; \bs)$. 
\myskip
\end{coro}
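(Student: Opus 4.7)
The plan is to combine Theorem \ref{thm:relation1} (every local minimizer is a pure-strategy NE of the modified population game) with the monotonicity of the best-response map $I^{\opt}$ (and hence of $p^\star$) in Corollary \ref{coro:mono}, and to argue by contradiction that a social local minimum cannot have strictly higher risk exposure than the NE of the original game.

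First I would translate both states into the language of $p^\star$. At the unique NE $\ba^{NE}$ of the original game, each coordinate satisfies $a^{NE}_d = I^{\opt}\!\bigl(\tau_A + d\cdot e(\ba^{NE};\bs)\bigr)$, so $p(a^{NE}_d) = p^\star\!\bigl(\tau_A + d\cdot e(\ba^{NE};\bs)\bigr)$. By Theorem \ref{thm:relation1}, $\ba^\star\in\cA_{NE}^{mc}(\bs)$, so applying Definition \ref{defn:PSNE} to the modified cost in \eqref{eq:CostMod} gives $a^\star_d = I^{\opt}\!\bigl(\tau_A + d\cdot\vartheta(\rho(\ba^\star;\bs))\bigr)$ and hence $p(a^\star_d) = p^\star\!\bigl(\tau_A + d\cdot\vartheta(\rho(\ba^\star;\bs))\bigr)$.

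Next I would argue by contradiction: suppose $e(\ba^\star;\bs) > e(\ba^{NE};\bs)$. Since $g^+$ is differentiable and strictly increasing, $\dot g^+\ge 0$, so
\begin{equation*}
\vartheta(\rho(\ba^\star;\bs)) = g^+(\rho(\ba^\star;\bs)) + \dot g^+(\rho(\ba^\star;\bs))\,\rho(\ba^\star;\bs) \geq e(\ba^\star;\bs) > e(\ba^{NE};\bs).
\end{equation*}
For every $d\in\cD$ this yields $\tau_A + d\cdot\vartheta(\rho(\ba^\star;\bs)) > \tau_A + d\cdot e(\ba^{NE};\bs)$, and Corollary \ref{coro:mono} (monotonicity of $p^\star$) gives $p(a^\star_d)\le p(a^{NE}_d)$. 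Taking the $\bw(\bs)$-weighted sum over $d$ produces $\rho(\ba^\star;\bs)\le\rho(\ba^{NE};\bs)$, and applying the strictly increasing $g^+$ yields $e(\ba^\star;\bs)\le e(\ba^{NE};\bs)$, contradicting the hypothesis.

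The only real obstacle is the sign accounting on the extra term $\dot g^+(\rho)\,\rho$ and making sure the monotonicity chain goes through uniformly in $d$; once those are in place the argument is essentially a one-line application of Corollary \ref{coro:mono} followed by weighted summation. Note the proof uses neither convexity of the social cost nor uniqueness of the local minimizer, so it applies to every element of $\cA_{LM}(\bs)$ simultaneously.
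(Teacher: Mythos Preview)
Your argument is correct and matches the paper's intended approach: the paper does not give a formal proof of this corollary but only the intuitive remark that, by Theorem~\ref{thm:relation1}, $\ba^\star$ is a pure-strategy NE of the modified game in which players face the heightened per-neighbor risk $\vartheta(\rho)\ge g^+(\rho)=e$, which forces at least as much investment as at $\ba^{NE}$. Your contradiction argument via $p^\star$ and the $\bw(\bs)$-weighted sum is exactly the rigorous version of that sentence, and the key inequality $\vartheta(\rho)\ge e$ comes from $\dot g^+\ge 0$ (Assumption~\ref{assm:exposure1}) and $\rho\ge 0$, just as you wrote.
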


Corollary~\ref{thm:GC_NI_comparison} 
tells us that the risk exposure at the NE is greater than or 
equal to that of the {\em worst local minimizer}. 
The intuition behind the corollary is that because $\ba^\star$
is a local (or global) minimizer of social cost, 
Theorem~\ref{thm:relation1}
tells us that it is a pure-strategy NE of the modified 
population game in which all players perceive heightened risks,
thereby forcing them to invest more in security than they would
at the NE of the original population game.

\subsection{Internalization of externalities}
	\label{subsec:Internalization}

Theorem~\ref{thm:relation1} has another, perhaps more practically
important implication for realizing the internalization of externalities, 
under a following reasonable assumption. 
\myskip

\begin{assm}	\label{assm:vartheta}
The mapping $\vartheta: \R_+ \to \R_+$ is strictly
increasing. 
\myskip
\end{assm}

While Assumption~\ref{assm:vartheta} does not
always hold as we illustrated earlier, it is likely to hold
in practice; a {\em sufficient} condition for the assumption
is that $g^+$ is differentiable and 
$z \cdot g^+(z) =: h(z)$ is strictly convex (note that 
$\dot{h}(z) = z \cdot \dot{g}^+(z) + g^+(z) =
\vartheta(z)$). This sufficient condition 
holds, for instance, if $g^+$ can be written as a sum of 
power functions with positive exponents, i.e., 
$g^+(z) = \sum_{k} a_k \cdot z^{b_k}$ 
for some $a_k, b_k > 0$, or $g^+(z) = a \cdot 
\log(1+z)$ for some $a > 0$. 
Furthermore, it is reasonable to expect that 
$e(\ba; \bs) = g^+(\rho(\ba; \bs))$ is at least linearly
increasing in the average number of 
{\em one-hop} indirect attacks, 
namely $\gamma_{\avg}(\ba; \bs)$. Because $\rho(\ba; \bs) 
\propto \gamma_{\avg}(\ba; \bs)$), this suggests that
$g^+$ is likely to be at least linear in $\rho(\ba; \bs)$
and Assumption~\ref{assm:vartheta} is likely to be satisfied.
\myskip

\begin{coro}	\label{coro:unique}
Suppose that Assumption~\ref{assm:vartheta} holds. 
Then, $\cA^{mc}_{NE}(\bs)$ is a singleton, i.e., 
there exists a {\em unique} (pure-strategy) NE of the 
modified population game, $\ba^{mc}_{NE}(\bs)$. 
Consequently, there exists a unique global
minimizer of social cost, which coincides with
$\ba^{mc}_{NE}(\bs)$. 
\myskip
\end{coro}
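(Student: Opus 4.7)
The plan is to reduce the statement to two observations: (i) the modified population game has exactly the same structure as the original game, except with the function $g^+$ replaced by $\vartheta$, and (ii) Theorem~\ref{thm:relation1} pins the global minimizers of social cost inside the set of pure-strategy NEs of the modified game.

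First I would argue that $\cA^{mc}_{NE}(\bs)$ is a singleton by mimicking the uniqueness proof of Theorem~\ref{thm:unique}. In the original game the key structural facts used to prove uniqueness are: $I^{\opt}$ is nondecreasing in the perceived risk, $p$ is strictly decreasing (Corollary~\ref{coro:mono}), and $g^+$ is strictly increasing. These facts let one reduce the NE fixed-point equation to a single scalar fixed-point equation in $\rho^\star = \bw(\bs)^T \mathbf{p}(\ba^{NE})$, where the right-hand side is nonincreasing while the left-hand side is strictly increasing, giving a unique crossing. In the modified game the analogous fixed-point equation reads
\begin{equation*}
\rho^\star \;=\; \sum_{d\in\cD} w_d(\bs)\, p\!\left(I^{\opt}\!\big(\tau_A + d\,\vartheta(\rho^\star)\big)\right),
\end{equation*}
so replacing $g^+$ by $\vartheta$ at each step of the original argument yields uniqueness, provided $\vartheta$ is strictly increasing --- which is exactly Assumption~\ref{assm:vartheta}. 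Denote this unique NE by $\ba^{mc}_{NE}(\bs)$.

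Next I would establish existence of a global minimizer of social cost. Since $\tilde{C}_T(\,\cdot\,,\bs)$ is continuous on the compact set $\cA^{D_{\max}} = [I_{\min}, I_{\max}]^{D_{\max}}$ (continuity follows from continuity of $p$, $L$, and $g$), a global minimizer exists. Every global minimizer is a fortiori a local minimizer, hence lies in $\cA_{LM}(\bs)$, and therefore by Theorem~\ref{thm:relation1} lies in $\cA^{mc}_{NE}(\bs) = \{\ba^{mc}_{NE}(\bs)\}$. This forces the global minimizer of social cost to be unique and equal to $\ba^{mc}_{NE}(\bs)$.

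The only nontrivial step is the first one: verifying that the scalar-reduction argument underlying Theorem~\ref{thm:unique} really does depend on $g^+$ only through strict monotonicity, so that swapping in $\vartheta$ preserves the argument. Once that is checked, the rest is essentially a one-line sandwich using Theorem~\ref{thm:relation1} and compactness. Note that we do not need convexity of $\tilde{C}_T$ anywhere --- uniqueness of the social optimum is obtained indirectly, through the NE of the modified game, which is the conceptual payoff of the corollary.
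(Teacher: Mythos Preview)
Your proposal is correct and matches the paper's approach: the paper explicitly states that the proof is ``similar to that of Theorem~\ref{thm:unique}'' (i.e., rerun the uniqueness argument with $\vartheta$ in place of $g^+$, which works precisely because Assumption~\ref{assm:vartheta} supplies the needed strict monotonicity), and then the containment $\cA_{LM}(\bs)\subseteq\cA^{mc}_{NE}(\bs)$ from Theorem~\ref{thm:relation1} together with compactness forces the unique global minimizer to coincide with $\ba^{mc}_{NE}(\bs)$. The only cosmetic difference is that the paper's proof of Theorem~\ref{thm:unique} (Appendix~\ref{appen:unique}) is phrased as a direct contradiction on the risk exposures rather than as a scalar fixed-point crossing, but as you correctly note, the argument depends on $g^+$ only through strict monotonicity, so your substitution goes through verbatim.
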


The proof of the corollary is similar to that of Theorem
\ref{thm:unique} in Appendix~\ref{appen:unique}
and is omitted here. 

Theorem~\ref{thm:relation1}, together with Corollary
\ref{coro:unique}, sheds some light on the structural 
relation between the (global) minimizer of social cost and 
the pure-strategy NE of a related population game under
Assumption~\ref{assm:vartheta}.  
Also, it hints at how we might be able to strengthen network 
security, for example, by levying penalties or taxes on the 
players for indirect attacks they suffer. 

Finally, it tells us how the function $g^+$ (which determines 
the risk exposure $e(\ba; \bs)$ as a function of the
vulnerability of neighboring nodes)
influences how much penalty we should impose on agents for 
externalities they produce. For this reason, it 
reveals how the `shape' of the function $g^+$ affects 
the necessary penalty. 
Thus, this finding might be useful for designing a 
guideline for internalizing the externalities so as to 
improve overall security and reduce the social cost, 
assuming that we can estimate the function $g^+$. 
\myskip

{\bf Example: $g^+(z) = a \ z^b$, $a , b > 0$ --} 
Suppose that the function $g^+$ can be approximated by a 
power function over an interval of interest to us. 
In this case, Assumption
\ref{assm:vartheta} holds and Corollary~\ref{coro:unique}
tells us that the unique NE of the modified population game 
minimizes the social cost. Furthermore, $\vartheta(z) 
= a(1 + b) z^b = (1+b) g^+(z)$ and 
the risk seen by a node from a 
neighbor is scaled by $(1 + b)$ in the altered cost 
function in (\ref{eq:CostMod}). Therefore, the penalty
that we need to levy on agents in order to achieve the minimum
social cost at an NE can be easily obtained from the 
risk or losses that a node experiences due to indirect
attacks from its neighbors. To be precise, the 
penalty should be set to $b$ times the losses from indirect
attacks.\footnote{When most of attacks experienced by a node
are indirect attacks, the penalty would be approximately
equal to $b$ times the total losses. This might be a 
reasonable approximation 
for nodes with large degrees (so-called hubs) or when 
the network is highly connected.} 
 
This observation is quite intuitive; as $b$ increases,
the risk from/to a neighbor becomes more sensitive to 
a change in $\rho(\ba; \bs) = \bw(\bs)^T {\bf p}(\ba)$ 
or, equivalently, a
change in the security investments of players. 
As a result, when a player 
reduces its security investment, it causes higher 
negative externalities to its neighbors, thereby 
calling for a higher penalty.

\section{Effects of node degree distribution on 
	externalities and risk exposure}
	\label{sec:Property}
	
Recall that, for a given pure strategy profile $\ba$, 
the risk exposure, hence the necessary penalties
or taxes for
internalizing externalities, is a function of the node
degree distribution (via $\bw(\bs)^T 
{\bf p}(\ba)$). Hence, it is of interest to understand 
the effects of node degree distribution, which reflects
to some extent the level of interdependence in security 
among players, on their security investments at both the 
NEs of population games and social optima. 
\myskip

\begin{theorem} \label{thm:GC_NE_Monotonicity}
Let ${\bf s}^1$ and ${\bf s}^2$ be two population size vectors
that satisfy 
\beqa
\sum_{\ell = 1}^d w_\ell(\bs^1) 
\myleq \sum_{\ell = 1}^d w_\ell(\bs^2) 
	\ \mbox{ for all } d \in \cD
	\label{eq:thm11}
\eeqa
and $\ba^i = \ba^{NE}(\bs^i)$, $i = 1, 2$. 
Then, $e(\ba^1; \bs^1) \leq e(\ba^2; \bs^2)$.
Consequently, $a^1_d \leq a^2_d$ for all $d \in \cD$.
Furthermore, if the inequality in (\ref{eq:thm11}) is
strict for some $d \in \cD$ and $\ba^2$ lies in the 
interior of $\cA^{D_{\max}}$, then 
$e(\ba^1; \bs^1) < e(\ba^2; \bs^2)$.  
\end{theorem}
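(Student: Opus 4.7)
The plan is to reformulate the NE condition as a one-dimensional fixed-point equation in the risk exposure $e$ and then compare its fixed points across the two population size vectors. For any $e\in\R_+$ and $d\in\cD$, set $P_d(e) := p^\star(\tau_A + d\cdot e)$; by Corollary~\ref{coro:mono} the map $P_d(\cdot)$ is nonincreasing and, for each fixed $e$, $P_d(e)$ is nonincreasing in $d$ as well. Writing $e^i := e(\ba^i;\bs^i)$, the identification $a^i_d = I^{\opt}(\tau_A + d\cdot e^i)$ (from Definition~\ref{defn:PSNE} together with the remarks preceding Corollary~\ref{coro:mono}) yields the fixed-point identity
\beq
e^i \;=\; g^+\!\Bigl(\sum_{d\in\cD} w_d(\bs^i)\, P_d(e^i)\Bigr) \;=:\; T^i(e^i), \qquad i=1,2.
\eeq
Since $g^+$ is strictly increasing and each $P_d$ is nonincreasing, $T^i$ is nonincreasing on $\R_+$; equivalently, $H^i(e) := e - T^i(e)$ is \emph{strictly} increasing with $H^i(e^i) = 0$.

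Second, I would compare $T^1$ and $T^2$ pointwise. The hypothesis~(\ref{eq:thm11}) is first-order stochastic dominance of $\bw(\bs^1)$ over $\bw(\bs^2)$, so Abel summation (using $F^i_{D_{\max}}=1$ for $i=1,2$) gives, for every $e\in\R_+$,
\beq
\sum_{d\in\cD}\bigl(w_d(\bs^1)-w_d(\bs^2)\bigr)P_d(e)
 \;=\; \sum_{d=1}^{D_{\max}-1}\bigl(F^1_d-F^2_d\bigr)\bigl(P_d(e)-P_{d+1}(e)\bigr) \;\leq\; 0,
\eeq
where $F^i_d := \sum_{\ell=1}^d w_\ell(\bs^i)$, because every factor $F^1_d - F^2_d$ is $\leq 0$ by assumption and every factor $P_d(e)-P_{d+1}(e)$ is $\geq 0$. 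Applying the strictly increasing $g^+$ yields $T^1(e)\leq T^2(e)$, and therefore $H^1(e)\geq H^2(e)$ for all $e\in\R_+$.

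Third, evaluating at $e=e^2$ gives $H^1(e^2)\geq H^2(e^2)=0=H^1(e^1)$, and the strict monotonicity of $H^1$ forces $e^1\leq e^2$; since $a^i_d = I^{\opt}(\tau_A + d\cdot e^i)$ and $I^{\opt}$ is nondecreasing, this immediately implies $a^1_d\leq a^2_d$ for every $d\in\cD$. For the strict case, if (\ref{eq:thm11}) is strict at some $d^*\in\cD$ (necessarily $d^*<D_{\max}$, since both CDFs equal one at $D_{\max}$) and $\ba^2\in(I_{\min},I_{\max})^{D_{\max}}$, then Corollary~\ref{coro:mono} gives $P_d(e^2)>P_{d+1}(e^2)$ for every $d<D_{\max}$; the summand at $d=d^*$ in the Abel identity above is therefore strictly negative, whence $T^1(e^2)<T^2(e^2)=e^2$, i.e.\ $H^1(e^2)>0$, and strict monotonicity of $H^1$ again forces $e^1<e^2$. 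The main obstacle I anticipate is that the natural best-response map $T^i$ is \emph{decreasing} rather than increasing in $e$, so a standard monotone-fixed-point comparison does not apply directly to $T^i$; routing the comparison through the strictly increasing auxiliary function $H^i$ is the device that turns pointwise domination of the $T^i$ into a clean inequality between their fixed points.
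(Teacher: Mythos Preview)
Your argument is correct and complete. The core ingredients coincide with the paper's---both proofs use (i) that $p^\star$ is nonincreasing so each population's best-response infection probability is monotone in the risk exposure, and (ii) first-order stochastic dominance of $\bw(\bs^1)$ over $\bw(\bs^2)$ combined with the monotonicity of $P_d$ in $d$ to compare the weighted vulnerabilities---but the packaging differs. The paper argues by contradiction: assuming $e(\ba^1;\bs^1)>e(\ba^2;\bs^2)$ it chains
\[
e(\ba^2;\bs^2)=g^+\!\bigl(\bw(\bs^2)^T{\bf p}(\ba^2)\bigr)\ge g^+\!\bigl(\bw(\bs^2)^T{\bf p}(\ba^1)\bigr)\ge g^+\!\bigl(\bw(\bs^1)^T{\bf p}(\ba^1)\bigr)=e(\ba^1;\bs^1),
\]
and reads off the strict case from strictness of the second inequality. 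Your approach instead reduces the NE to a one-dimensional fixed point of a nonincreasing map $T^i$ and compares the two fixed points via the strictly increasing auxiliary $H^i=\mathrm{id}-T^i$, with Abel summation making the pointwise comparison $T^1\le T^2$ explicit. This is a bit longer but more transparent: it isolates exactly where the stochastic dominance enters and why a nonincreasing best-response map still yields an ordered comparison of fixed points. One small point you leave implicit in the strict case is that $e^2>0$ (needed so that $\tau_A+d\,e^2$ is strictly increasing in $d$); this follows from $\ba^2$ being interior, since then $p(a^2_d)>0$ for every $d$ and hence $\rho(\ba^2;\bs^2)>0$.
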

\begin{proof}
A proof is given in Appendix~\ref{appen:GC_NE_Monotonicity}. 
\end{proof}

If the inequality in (\ref{eq:thm11}) is strict for some 
$d \in \cD$, i.e., $\bw(\bs^1) \neq \bw(\bs^2)$, ${\bf w}(\bs^1)$
first-order stochastically dominates ${\bf w}(\bs^2)$
\cite{ShakedShan}. Thus, 
the first part of Theorem~\ref{thm:GC_NE_Monotonicity} states 
that, as the weighted node degree distribution becomes 
(stochastically) larger, the risk exposure at the NE declines. 
What is somewhat surprising 
is that the risk exposure falls even though every 
population $d$ invests {\em less} in security and is more
vulnerable to attacks because $a^1_d \leq a^2_d$ for
all $d \in \cD$. 

The intuition behind these 
perhaps counterintuitive findings is that, as the degrees 
of {\em neighbors} rise (with stochastically larger
weighted node degree distributions), a node with a fixed 
degree, say $d$, experiences diminished risk from
its neighbors because nodes with higher degrees, which see
larger risks and invest more in security, are better
protected as captured by Corollary~\ref{coro:mono}. 
As a consequence, the node enjoys greater positive externalities
from its neighbors and reduces its own security investments. 

Suppose that Assumption~\ref{assm:vartheta} holds and, 
hence, there exists a unique minimizer of social cost. 
We shall use $\ba^\star(\bs)$ to denote the unique
minimizer of the social cost for fixed population 
size vector $\bs$. When there is
no confusion, we simply denote it by $\ba^\star$.
The following theorem tells us that, analogously to the
finding in Theorem~\ref{thm:GC_NE_Monotonicity}, 
the risk exposure at the social optimum also
drops as the weighted node degree
distribution becomes larger. 
\\ \vspace{-0.1in}

\begin{theorem} \label{thm:GC_SO_Monotonicity}
Suppose that Assumption~\ref{assm:vartheta} is true. Let 
${\bf s}^1$ and ${\bf s}^2$ be two population size vectors
that satisfy 
\begin{equation*}
\sum_{\ell = 1}^d w_\ell(\bs^1) 
\leq \sum_{\ell = 1}^d w_\ell(\bs^2) 
	\ \mbox{ for all } d \in \cD. 
	\tag{\ref{eq:thm11}}
\end{equation*}
Then, $e\big( \ba^\star(\bs^1) ; \bs^1 \big) \leq 
e\big( \ba^\star(\bs^2) ; \bs^2 \big)$.
Moreover, if the inequality in (\ref{eq:thm11}) is
strict for some $d \in \cD$ and $\ba^\star(\bs^2)$ lies in the 
interior of $\cA^{D_{\max}}$, then the inequality is strict.  
\end{theorem}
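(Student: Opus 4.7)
The plan is to leverage Corollary \ref{coro:unique} to recast the whole statement as a statement about the unique pure-strategy NE of the \emph{modified} population game with cost $\hat{C}$ in (\ref{eq:CostMod}), and then mimic (essentially verbatim) the argument that was used to prove Theorem \ref{thm:GC_NE_Monotonicity} for the original population game. Under Assumption \ref{assm:vartheta} the social optimum $\ba^\star(\bs)$ coincides with $\ba^{mc}_{NE}(\bs)$, so showing $e(\ba^\star(\bs^1);\bs^1) \leq e(\ba^\star(\bs^2);\bs^2)$ reduces, via the strict monotonicity of $g^+$, to showing that the common neighbor-vulnerability $\rho(\ba^\star(\bs);\bs)$ is monotone in the weighted degree distribution in the correct direction.

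First I would write down the scalar fixed-point equation that $\rho^\star(\bs) := \rho(\ba^\star(\bs);\bs)$ must satisfy. By the best-response characterization at $\ba^\star(\bs) = \ba^{mc}_{NE}(\bs)$, for every $d \in \cD$ one has $a^\star_d = I^{\opt}(\tau_A + d\cdot\vartheta(\rho^\star(\bs)))$. Applying $p$ and averaging against the weights $w_d(\bs)$ gives
$$\rho^\star(\bs) \;=\; \sum_{d \in \cD} w_d(\bs)\, p^\star\!\bigl(\tau_A + d\cdot \vartheta(\rho^\star(\bs))\bigr) \;=:\; \Psi(\rho^\star(\bs),\bs).$$
By Corollary \ref{coro:mono} the map $r \mapsto \Psi(r,\bs)$ is non-increasing, and Assumption \ref{assm:vartheta} (strict monotonicity of $\vartheta$) together with the argument underlying Corollary \ref{coro:unique} guarantees that this fixed point is unique.

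Next comes the monotonicity in $\bs$. The map $d \mapsto \phi_d(r) := p^\star(\tau_A + d\cdot\vartheta(r))$ is non-increasing in $d$ for each fixed $r$, so the hypothesis (\ref{eq:thm11}), which expresses the first-order stochastic dominance of $\bw(\bs^1)$ over $\bw(\bs^2)$ (as was noted after Theorem \ref{thm:GC_NE_Monotonicity}), yields $\Psi(r,\bs^1) \leq \Psi(r,\bs^2)$ for every $r \geq 0$. Combining this with the fact that $\Psi(\cdot,\bs^2)$ is non-increasing, suppose for contradiction that $\rho^\star(\bs^1) > \rho^\star(\bs^2)$. Then
$$\rho^\star(\bs^1) \;=\; \Psi(\rho^\star(\bs^1),\bs^1) \;\leq\; \Psi(\rho^\star(\bs^1),\bs^2) \;\leq\; \Psi(\rho^\star(\bs^2),\bs^2) \;=\; \rho^\star(\bs^2),$$
a contradiction. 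Hence $\rho^\star(\bs^1) \leq \rho^\star(\bs^2)$, and the first claim follows by applying $g^+$.

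For the strict inequality in the moreover clause, if $\ba^\star(\bs^2)$ lies in the interior of $\cA^{D_{\max}}$, then every $a^\star_d(\bs^2) \in (I_{\min},I_{\max})$, so the strict case of Corollary \ref{coro:mono} makes $\phi_d(\rho^\star(\bs^2))$ strictly decreasing in $d$; an Abel-summation bookkeeping then turns a strict inequality in (\ref{eq:thm11}) at any single $d$ into $\Psi(\rho^\star(\bs^2),\bs^1) < \Psi(\rho^\star(\bs^2),\bs^2)$, and repeating the above contradiction argument with weak inequalities tightened at the right spot yields $\rho^\star(\bs^1) < \rho^\star(\bs^2)$; the strict monotonicity of $g^+$ finishes the proof. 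The main obstacle is this last bookkeeping step: one must check that a strict gap in the cumulative weights at one $d^\circ$ translates into a strict gap in the weighted average of a function that is only weakly monotone at $d \neq d^\circ$. This is precisely where the interiority of $\ba^\star(\bs^2)$ is used, since it guarantees strict decrease of $\phi_d$ in $d$ throughout the support, so that any redistribution of mass toward smaller degrees produces a strict increase in $\Psi(\rho^\star(\bs^2),\cdot)$.
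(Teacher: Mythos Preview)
Your approach is correct but takes a different route from the paper. Both arguments begin by invoking Corollary~\ref{coro:unique} to identify $\ba^\star(\bs^i)$ with the unique pure-strategy NE $\ba^{mc}_{NE}(\bs^i)$ of the modified game, and both argue by contradiction. From there, however, they diverge. You reduce everything to the scalar fixed-point equation $\rho^\star(\bs) = \Psi(\rho^\star(\bs),\bs)$ and exploit two monotonicities of $\Psi$ (non-increasing in $r$, and ordered in $\bs$ via first-order stochastic dominance applied to the non-increasing map $d\mapsto p^\star(\tau_A+d\,\vartheta(r))$); this is exactly the machinery behind Theorem~\ref{thm:GC_NE_Monotonicity}, transplanted to the modified game. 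The paper instead, after observing that the contradiction hypothesis forces $a^\star_{d^*}(\bs^1) < a^\star_{d^*}(\bs^2)$ for some $d^*$, works directly with the KKT conditions (\ref{eq:2-1}) for the social cost: the complementary slackness at $d^*$ pins the signs of the normalized derivatives on each side, and Assumption~\ref{assm:vartheta} together with the strict convexity of $p$ then yields a sign contradiction. Your route is shorter and makes the parallel with Theorem~\ref{thm:GC_NE_Monotonicity} transparent; the paper's KKT route is heavier but has the virtue of being self-contained at the level of the social-cost optimization, never needing the explicit best-response map $I^{\opt}$ or the composite $p^\star$. For the strict clause, both arguments hinge on the same point: interiority of $\ba^\star(\bs^2)$ forces $d\mapsto p(a^\star_d(\bs^2))$ to be \emph{strictly} decreasing (via the strict part of Corollary~\ref{coro:mono}), which is precisely what turns a single strict inequality in (\ref{eq:thm11}) into a strict gap in the weighted average.
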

\begin{proof}
Please see Appendix~\ref{appen:GC_SO_Monotonicity} for a proof. 
\end{proof}

Theorems~\ref{thm:GC_NE_Monotonicity} and 
\ref{thm:GC_SO_Monotonicity} state that the risk exposure
at both NEs and social optima tends to decline as the 
weighted node degree distribution becomes larger. Hence, our 
results indicate that both in the distributed case with selfish agents
and in the centralized case with SP, higher network connectivity 
will likely improve local network security measured by risk exposure
and, hence, the risk seen by a node with a fixed degree.

At the same time, our findings in \cite{La_TON_Cascade} reveal 
that the {\em global} network security may 
in fact deteriorate; a key observation of \cite{La_TON_Cascade}
is that as the weighted node degree distribution becomes larger, 
the cascade probability (i.e., the probability that a random single 
infection leads to a cascade of infection to a large number of nodes)
rises. Hence, together, these findings suggest that, as the weighted
node degree distribution gets larger, individual nodes with fixed 
degrees might perceive improved {\em local} 
security because the average number of 
attacks they see falls, whereas the {\em global} network security
degrades in that even a single successful infection may spread to 
a large number of other nodes in the network with higher probability. 

Corollary~\ref{coro:unique} offers some intuition as to
why the NEs of population games and social optima possess similar
properties shown in Theorems~\ref{thm:GC_NE_Monotonicity}
and \ref{thm:GC_SO_Monotonicity}; social optima can be viewed as
the pure-strategy NEs of related population games with modified cost 
functions.

The claims in Theorem~\ref{thm:GC_NE_Monotonicity} or
\ref{thm:GC_SO_Monotonicity} do not always hold when we replace 
the weighted node degree distributions in (\ref{eq:thm11}) 
with the node degree distributions of two population sizes. In other 
words, we can find two population size vectors $\tilde{\bs}^1$ and 
$\tilde{\bs}^2$ such that
\beqan
\sum_{\ell = 1}^d \tilde{s}^1_\ell 
\myleq \sum_{\ell = 1}^d \tilde{s}^2_\ell 
	\ \mbox{ for all } d \in \cD, 
\eeqan
but the claims in Theorem~\ref{thm:GC_NE_Monotonicity} 
or \ref{thm:GC_SO_Monotonicity} fail to hold. 

The following lemma provides a sufficient condition for
the condition (\ref{eq:thm11}). 
\\ \vspace{-0.1in}

\begin{lemma}	\label{lemma:1}
Suppose that two population size vectors $\bs^1$ and $\bs^2$ 
satisfy
\beqa
\frac{ s^2_d }{ s^1_d } 
	\geq \frac{ s^2_{d+1} }{ s^1_{d+1} } 
	\ \mbox{ for all } d \in \cD^- := \{1, 2, \ldots, D_{\max} - 1\}. 
	\label{eq:lemma0}
\eeqa
Then, the condition (\ref{eq:thm11}) in Theorems
\ref{thm:GC_NE_Monotonicity} and \ref{thm:GC_SO_Monotonicity} 
is satisfied.
\end{lemma}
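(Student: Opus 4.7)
The plan is to unpack the definition of $w_d(\bs)$ and reduce the desired stochastic-order inequality to a pairwise comparison of the entries of $\bs^1$ and $\bs^2$, which is exactly the hypothesis (\ref{eq:lemma0}) in disguise (a monotone likelihood ratio condition). Write $T_i := \sum_{d' \in \cD} d' \cdot s^i_{d'}$ for $i = 1, 2$, so that $w_\ell(\bs^i) = \ell \cdot s^i_\ell / T_i$. The target inequality $\sum_{\ell=1}^d w_\ell(\bs^1) \le \sum_{\ell=1}^d w_\ell(\bs^2)$ is then equivalent, after clearing denominators, to
\[
T_2 \sum_{\ell=1}^d \ell \cdot s^1_\ell \ \le\ T_1 \sum_{\ell=1}^d \ell \cdot s^2_\ell.
\]

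Next, I would split the full sums $T_1$ and $T_2$ as $\sum_{\ell' = 1}^{D_{\max}} = \sum_{\ell' = 1}^d + \sum_{\ell' = d+1}^{D_{\max}}$ on both sides. The ``diagonal block'' obtained from $\ell, \ell' \in \{1, \ldots, d\}$ appears on both sides and cancels, so the inequality reduces to
\[
\sum_{\ell = 1}^d \sum_{\ell' = d+1}^{D_{\max}} \ell \cdot \ell' \cdot \bigl( s^1_\ell s^2_{\ell'} - s^2_\ell s^1_{\ell'} \bigr) \ \le\ 0.
\]
Since the coefficients $\ell \cdot \ell'$ are strictly positive, it suffices to verify the termwise inequality $s^1_\ell s^2_{\ell'} \le s^2_\ell s^1_{\ell'}$ for every pair with $\ell \le d < \ell'$, which is precisely $s^2_\ell / s^1_\ell \ge s^2_{\ell'} / s^1_{\ell'}$.

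Finally, this pairwise monotonicity follows by transitivity from the hypothesis (\ref{eq:lemma0}), which states that $d \mapsto s^2_d/s^1_d$ is non-increasing on $\cD$: for any $\ell \le d < \ell'$, chaining the inequalities $s^2_\ell/s^1_\ell \ge s^2_{\ell+1}/s^1_{\ell+1} \ge \cdots \ge s^2_{\ell'}/s^1_{\ell'}$ yields the required bound. (Implicitly, one should note that the entries $s^i_d$ can be assumed strictly positive on the support; if some $s^1_d$ vanishes, the ratios are interpreted in the standard way and the cross-product form $s^1_\ell s^2_{\ell'} \le s^2_\ell s^1_{\ell'}$ is used directly, so no difficulty arises.) The argument is essentially bookkeeping; there is no substantive obstacle, and the monotone likelihood ratio hypothesis translates cleanly into the claimed first-order stochastic dominance on the weighted degree distribution.
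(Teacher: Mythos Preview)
Your proof is correct and follows essentially the same route as the paper: both recognize that the hypothesis is a monotone likelihood ratio condition on the sequences $(d\,s^1_d)$ and $(d\,s^2_d)$, which implies the required first-order stochastic dominance of $\bw(\bs^1)$ over $\bw(\bs^2)$. The only packaging difference is that the paper factors this implication into an auxiliary lemma about partial sums of sequences with monotone ratios (its Lemma~\ref{lemma:2}), whereas you prove it inline via the cross-product expansion and antisymmetric cancellation; your computation is in effect a proof of that auxiliary lemma.
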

\begin{proof}
Please see Appendix~\ref{appen:lemma1} for a proof. 
\end{proof}

The finding in Lemma~\ref{lemma:1} can be applied to several 
well known families of distributions. For example, consider a 
family of (truncated) power law degree distributions 
$\{ {\bf s}^{\alpha}; \ \alpha \in \R_+ \}$, where $s^\alpha_d 
\propto d^{-\alpha}$, $d \in \cD$. Suppose that $\alpha_1 \leq 
\alpha_2$. Then, one can easily show that ${\bf s}^{\alpha_1}$ 
and ${\bf s}^{\alpha_2}$ satisfy (\ref{eq:lemma0}) as 
follows:
\beqan
\frac{d^{-\alpha_2}}{d^{-\alpha_1}} 
\myeq d^{-(\alpha_2 - \alpha_1)}
\geq (d+1)^{-(\alpha_2 - \alpha_1)}
= \frac{(d+1)^{-\alpha_2}}{(d+1)^{-\alpha_1}} 
\eeqan
because $\alpha_2 - \alpha_1 \geq 0$.

\section{Numerical Results}
	\label{sec:Numerical}
	
In this section, we provide some numerical results to 
validate our main findings and to explore how the 
sensitivity of the risk exposure to security 
investments of agents (i.e., the derivative of 
$g^+$) and the infection probability function $p$
affect the inefficiency of NEs. 

For numerical studies, we use truncated power law degree 
distributions with varying parameter $\alpha$. The infection 
probability is equal to $p(a) = (1 + a)^{-\zeta}$
with $\zeta > 0$. The risk exposure function $g^+$ is given  
by $e(\ba; \bs) = g^+(\rho(\ba; \bs)) 
= 30 (\rho(\ba; \bs))^b$ for $b = 1.1$ 
and 2.0. Recall from the discussion in Section
\ref{subsec:Internalization} that the employed
risk exposure function $g^+$ satisfies Assumption
\ref{assm:vartheta} and the unique (pure-strategy) NE
of the modified population game coincides with the 
unique global minimizer of social cost.

The values of remaining parameters
are provided in Table~\ref{table:par}. We picked 
large $I_{\max}$ on purpose so that there is no 
budget constraint at the NE or social optimum.

\begin{table}[h]	
\begin{center}
\begin{tabular}{|c|c || c| c |}
\hline 
Parameter & Value & Parameter & Value \\
\hline
$D_{\max}$ & 20 & $L$ & 10  \\
$\tau_A$ & 0.7 & $\beta_{IA}$ & 1.0 \\
$\cA$ & [0, 1000] & $\zeta$ & 1.5 or 2.5 \\
$b$ & 1.1 or 2.0 & & \\
\hline
\end{tabular}
\end{center}
\caption{Parameter values.}
\label{table:par}
\end{table}

\begin{figure}[h]
\centerline{
\includegraphics[width=1.75in]{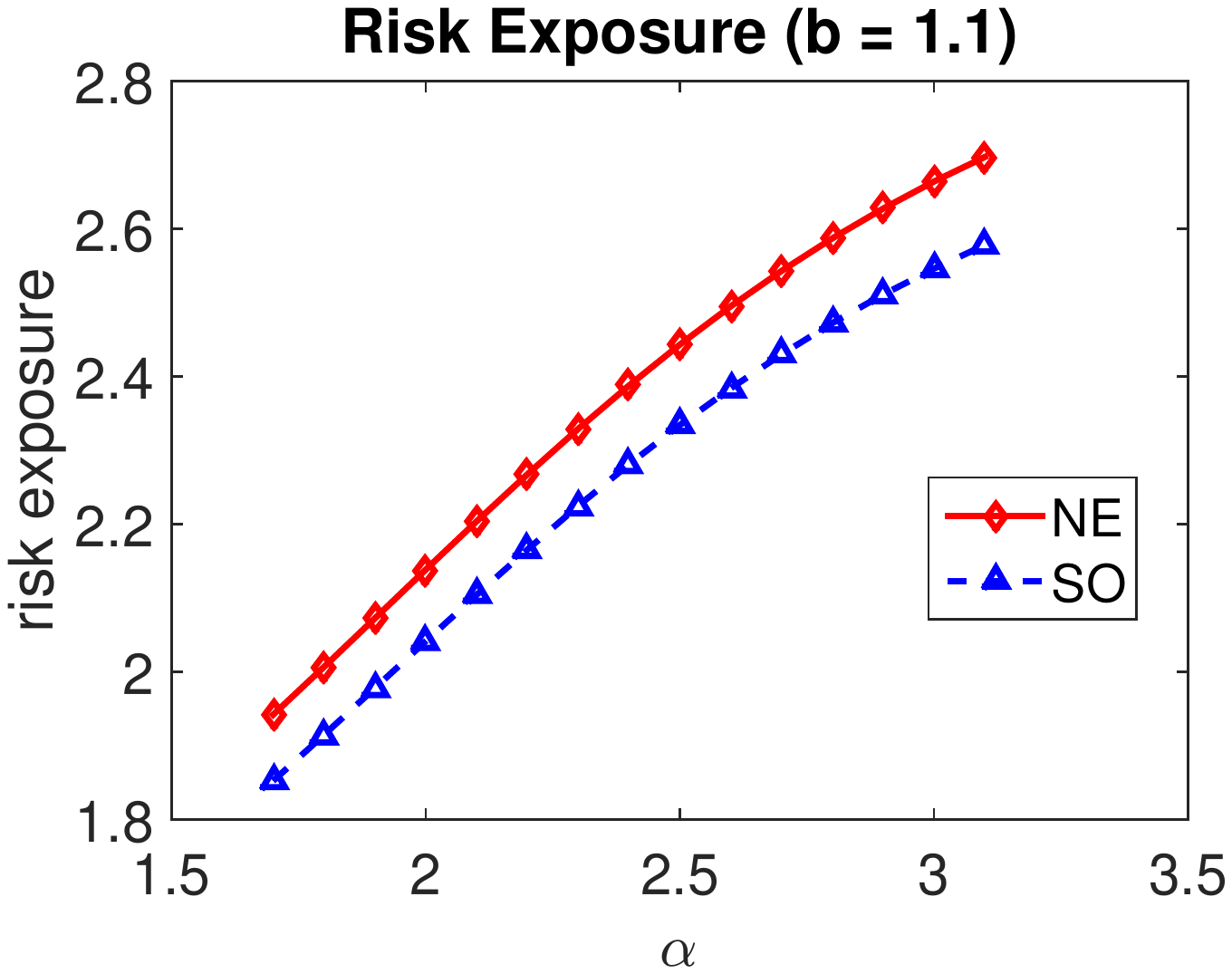}
\includegraphics[width=1.75in]{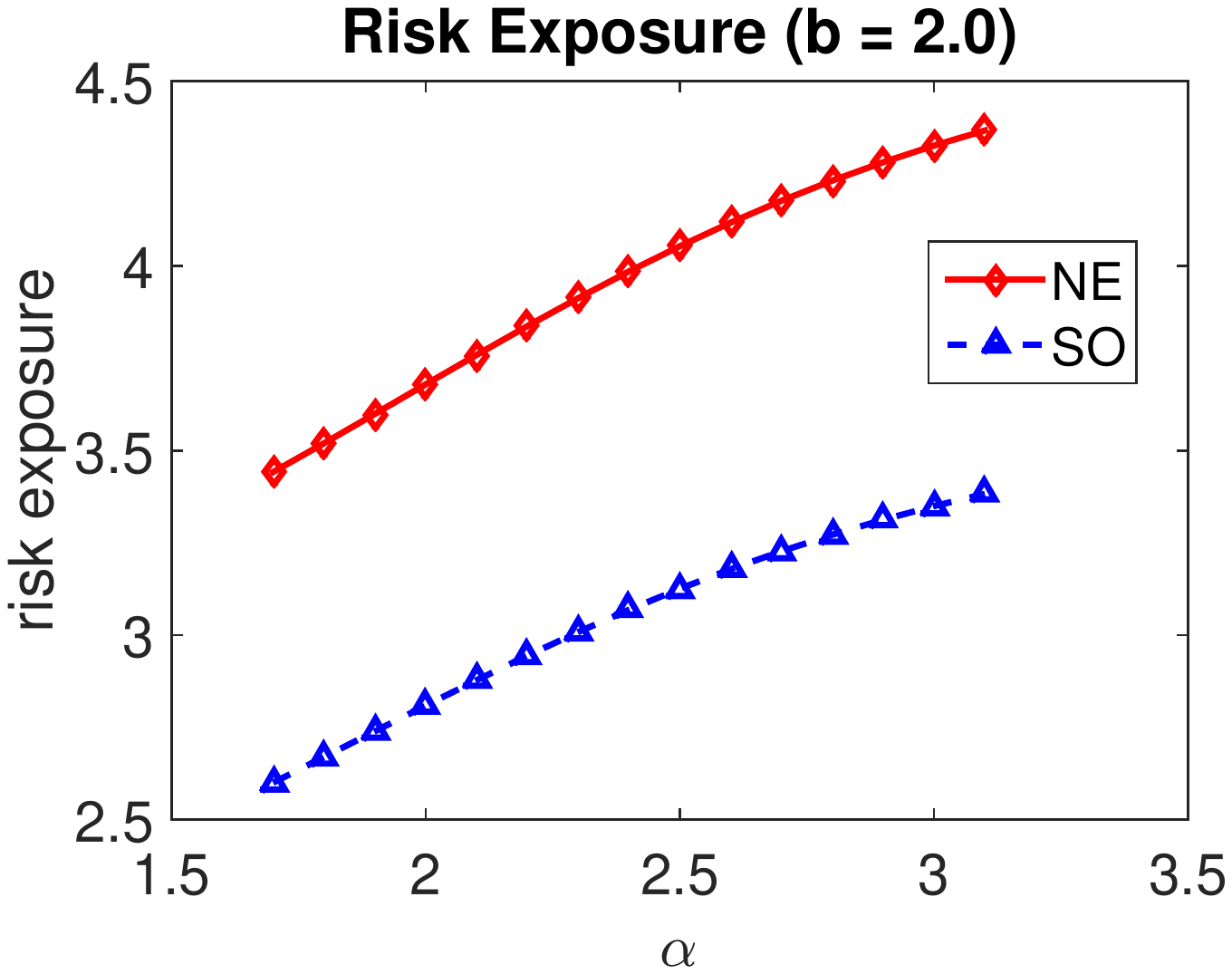}
}
\centerline{(a)}
\centerline{
\includegraphics[width=1.75in]{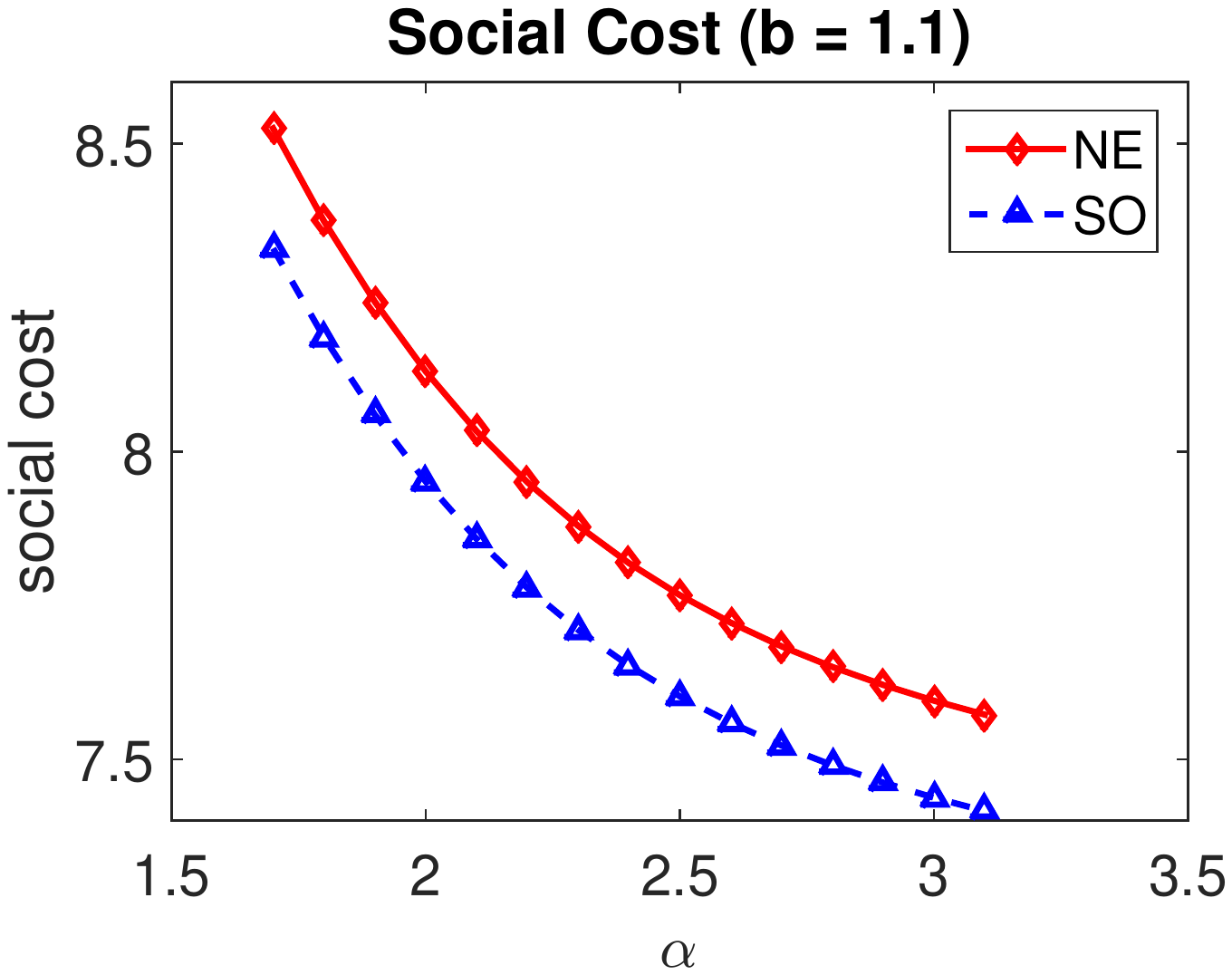}
\includegraphics[width=1.75in]{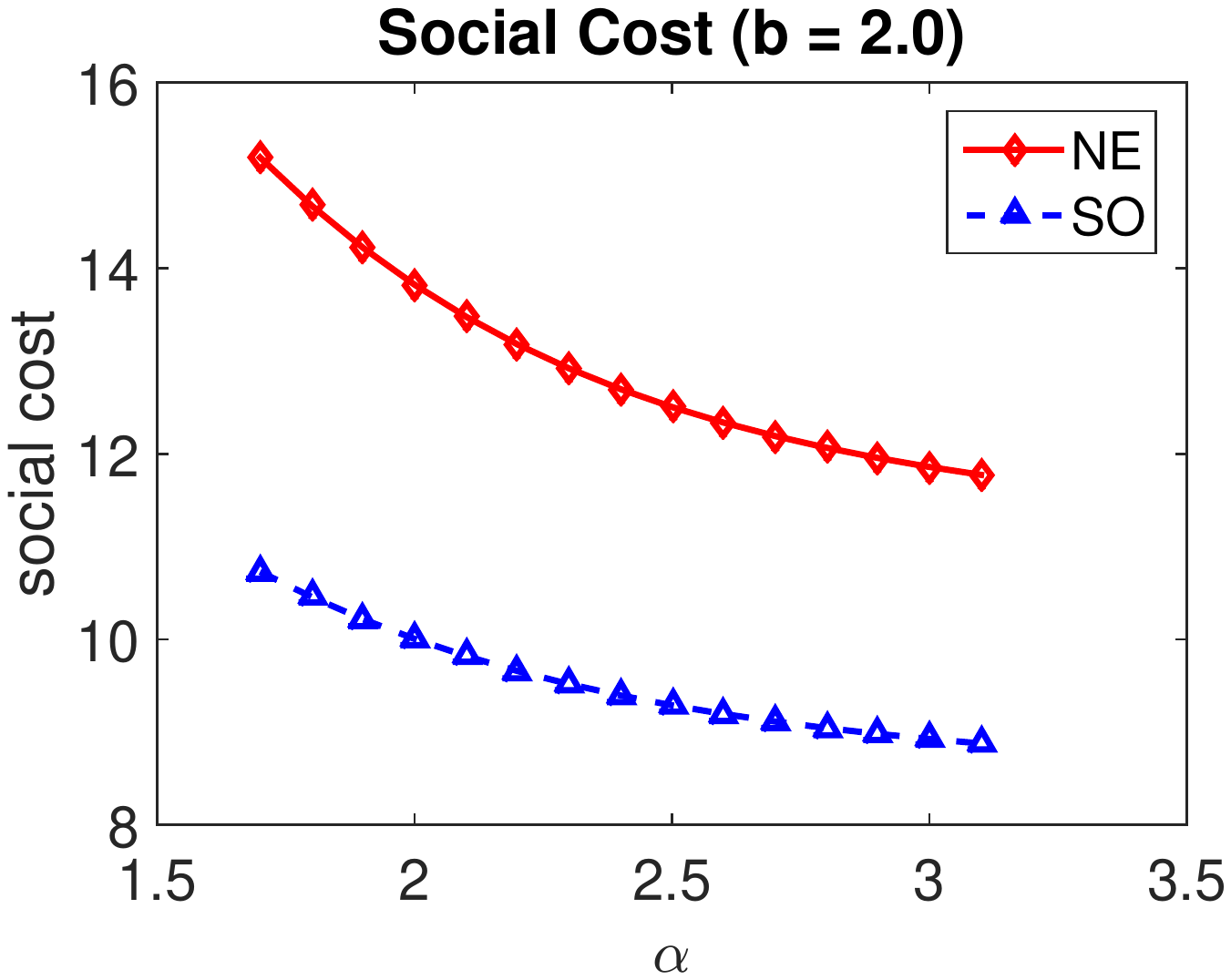}
}
\centerline{(b)}
\centerline{
\includegraphics[width=1.75in]{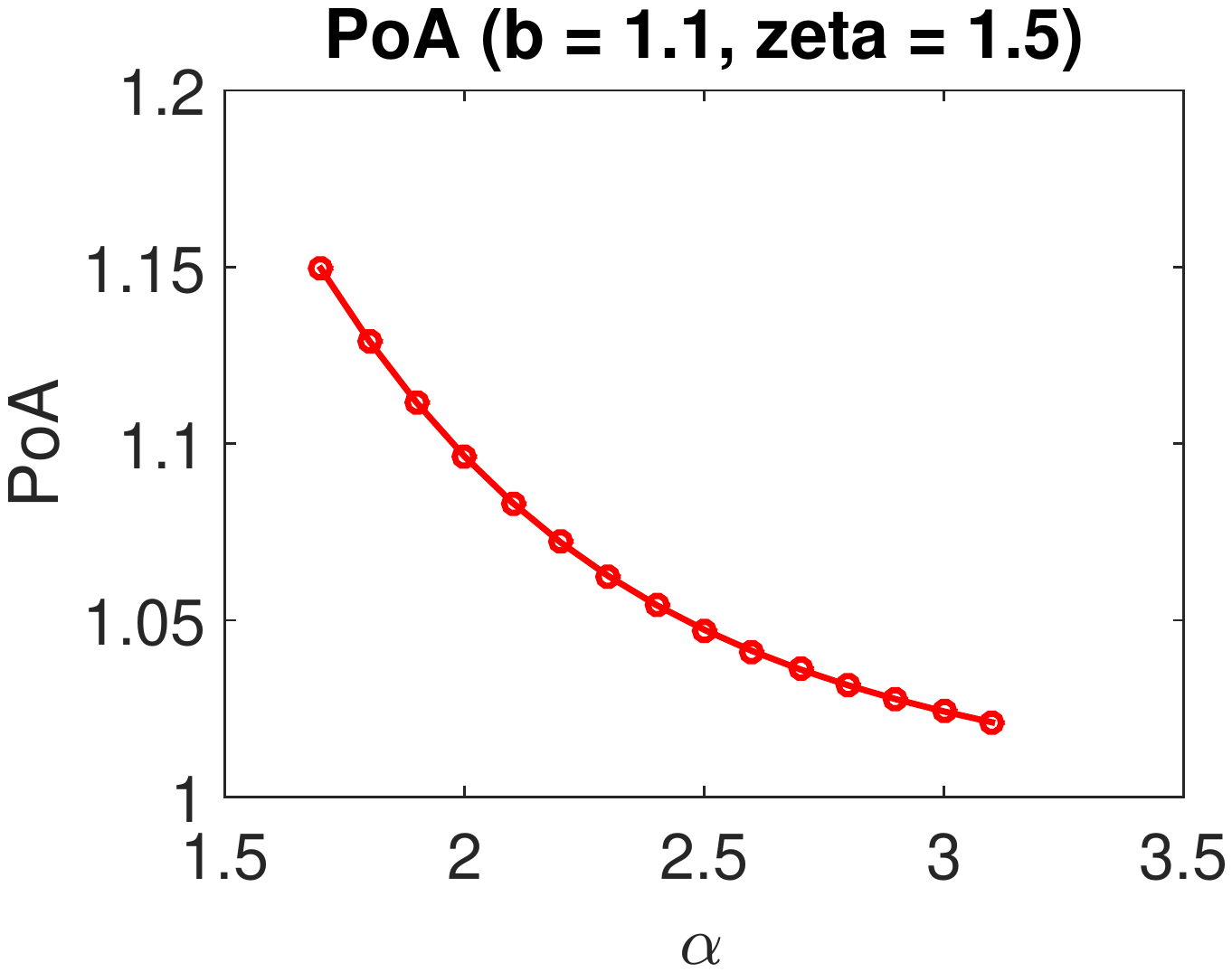}
\includegraphics[width=1.75in]{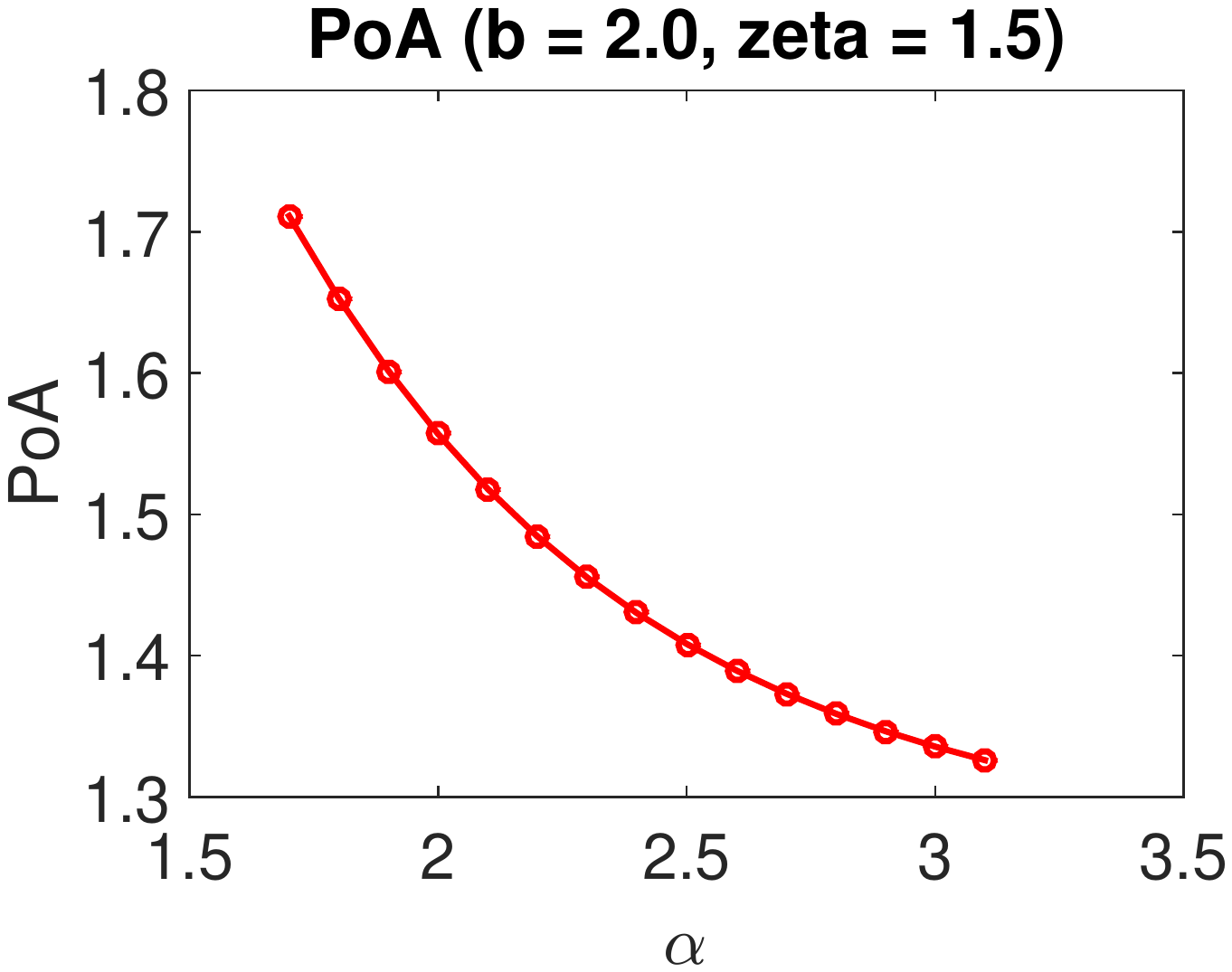}
}
\centerline{(c)}
\caption{(a) Risk exposure, (b) social cost and 
	(c) price of anarchy ($\zeta = 1.5$).}
\label{fig:15}
\end{figure}

Fig.~\ref{fig:15} plots the risk exposure and social cost 
at both NEs and social optima as well as the so-called
price of anarchy (PoA) for $\zeta = 1.5$ as we 
vary the value of power law parameter $\alpha$. The 
PoA is defined to be the ratio of the social cost 
at the worst NE to the minimum achievable social cost, 
and is a popular measure of the inefficiency of NE. 
Lemma~\ref{lemma:1} tells us that 
as $\alpha$ increases, the weighted degree
distribution $\bw$ becomes stochastically
smaller. Thus, Theorems~\ref{thm:GC_NE_Monotonicity}
and \ref{thm:GC_SO_Monotonicity} state that the 
risk exposure rises at both NEs and social optima. 
This is confirmed by Fig.~\ref{fig:15}(a). 

Our analytical findings, however, do not suggest how 
the social cost would behave with increasing $\alpha$. 
Fig.~\ref{fig:15}(b) shows that
the social cost in fact decreases with $\alpha$ in spite of
increasing risk exposure. This is because the 
underlying dependence graph becomes {\em less} 
connected and nodes have smaller degrees.

In addition, Figs.~\ref{fig:15}(b) and 
\ref{fig:15}(c) indicate that the 
gap between the NE and social optimum widens both in 
risk exposure and social cost as the risk
exposure (i.e., $g^+$) becomes more sensitive to the 
security investments when $b$ is raised to 2.0 
from 1.1, thereby causing higher PoA. 
This observation is intuitive; when the risk 
exposure is more sensitive to security investments, 
when agents make less security investments at the
NE compared to the social optimum, it leads to 
larger increase in risk exposure and social cost. 

Finally, Fig.~\ref{fig:15}(c) illustrates that 
the PoA is larger when the dependence graph is 
more densely connected. This observation is consistent
with that of 
\cite{La_TON_Cascade}: Theorem 7 of 
\cite{La_TON_Cascade} proves a tight upper
bound on PoA, which is an affine function of
the average node degree. Thus, as the dependence
graph becomes more connected, leading to a higher
average node degree, the NE becomes less efficient
in that the PoA escalates.

\begin{figure}[h]
\centerline{
\includegraphics[width=1.75in]{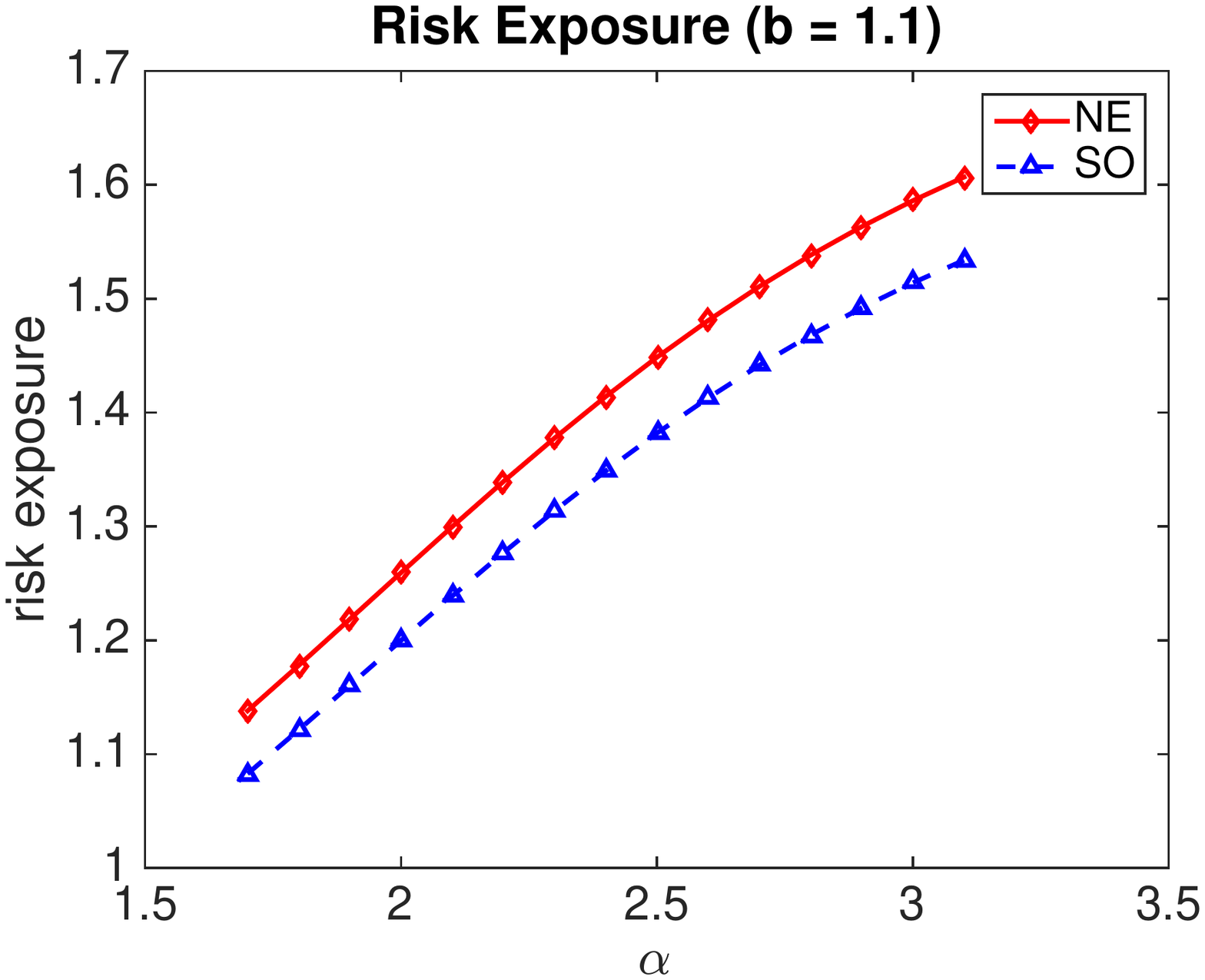}
\includegraphics[width=1.75in]{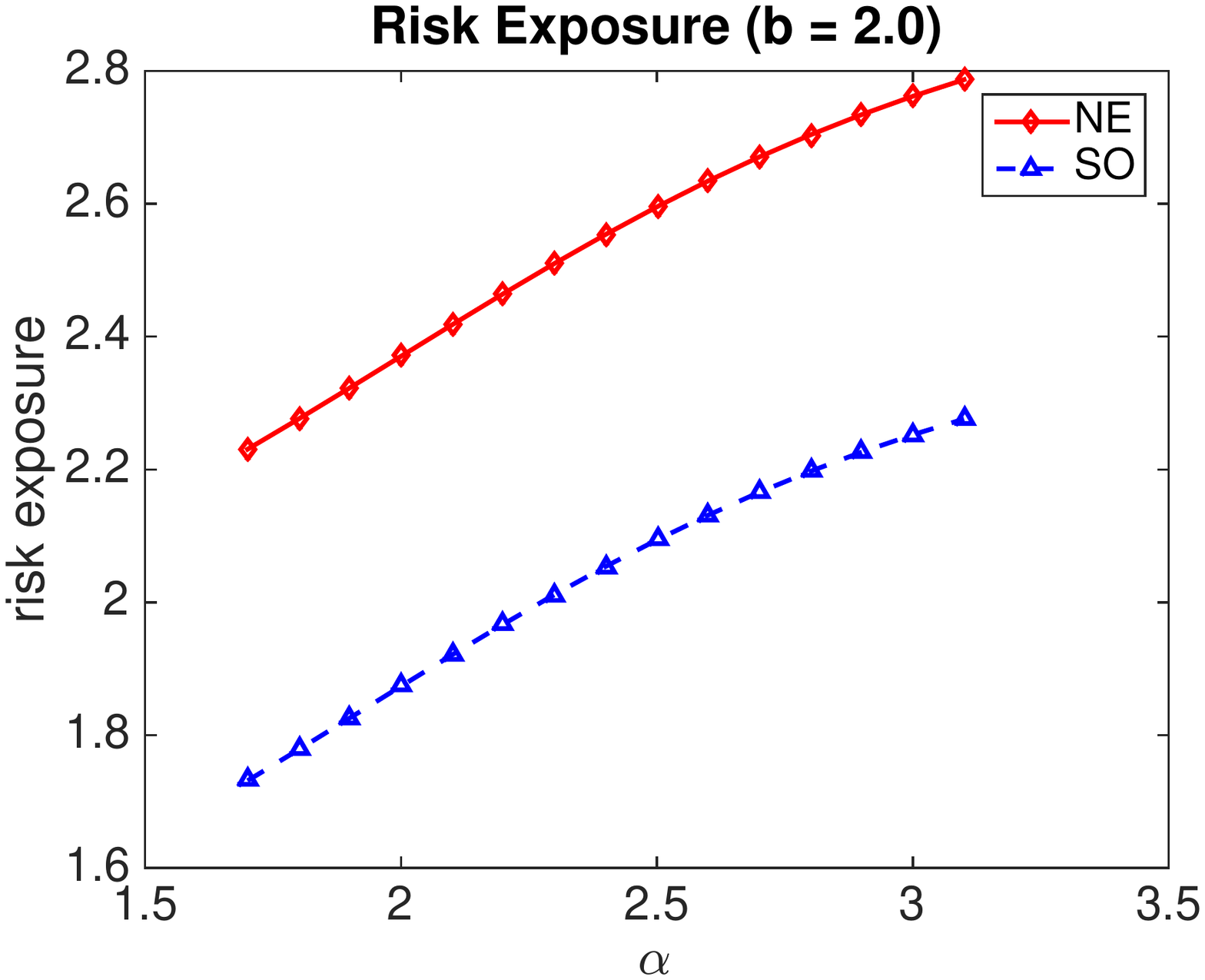}
}
\centerline{(a)}
\centerline{
\includegraphics[width=1.75in]{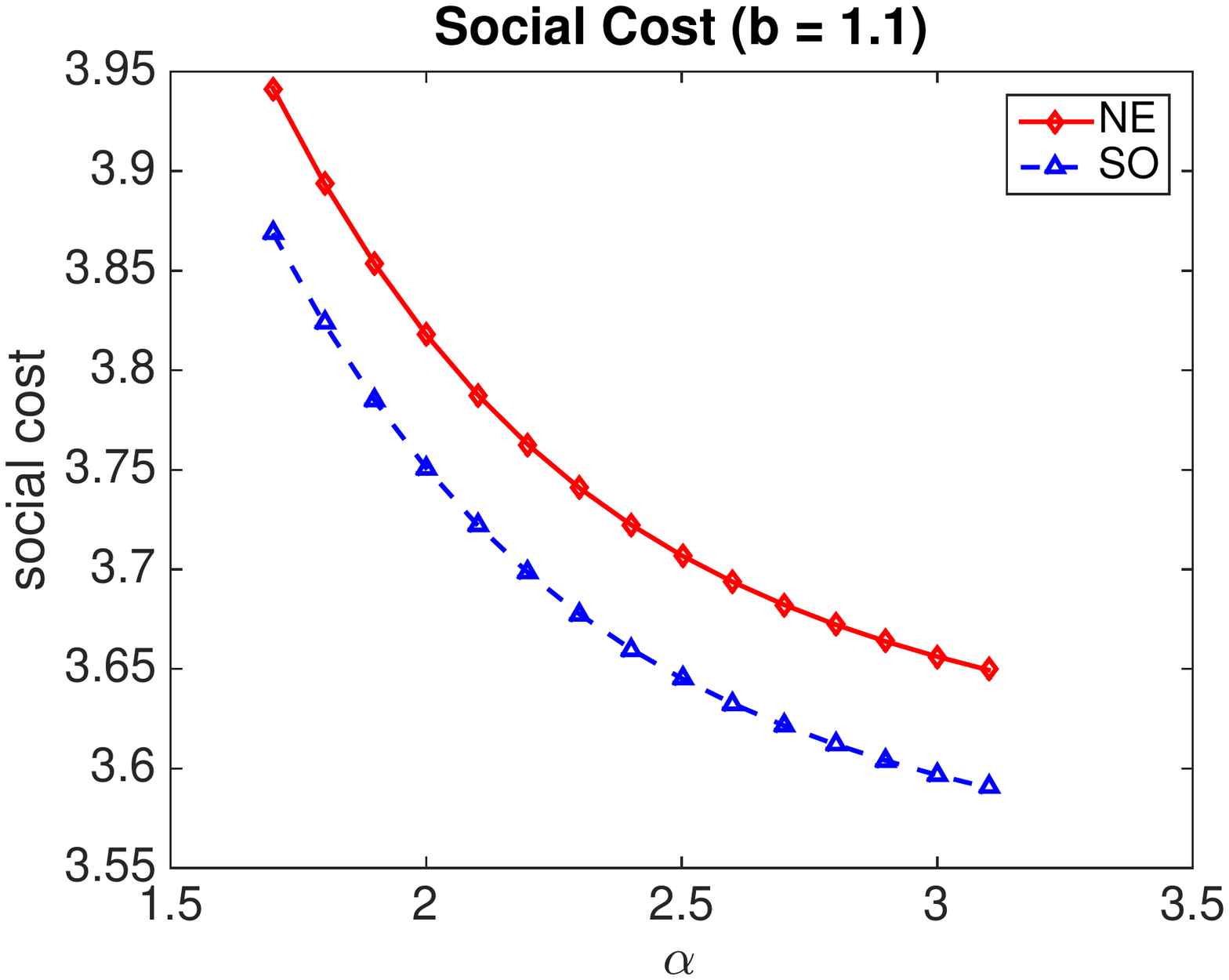}
\includegraphics[width=1.75in]{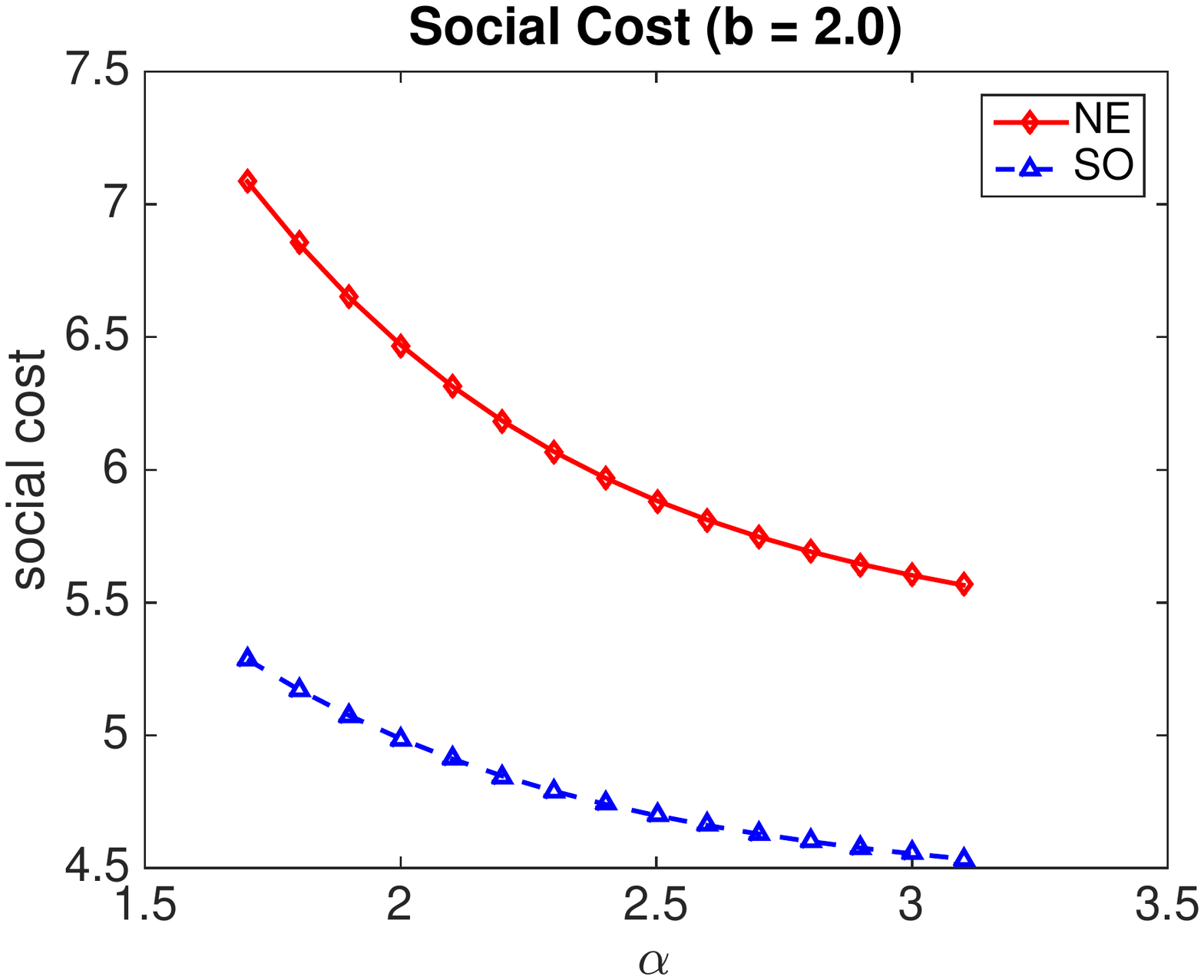}
}
\centerline{(b)}
\centerline{
\includegraphics[width=1.75in]{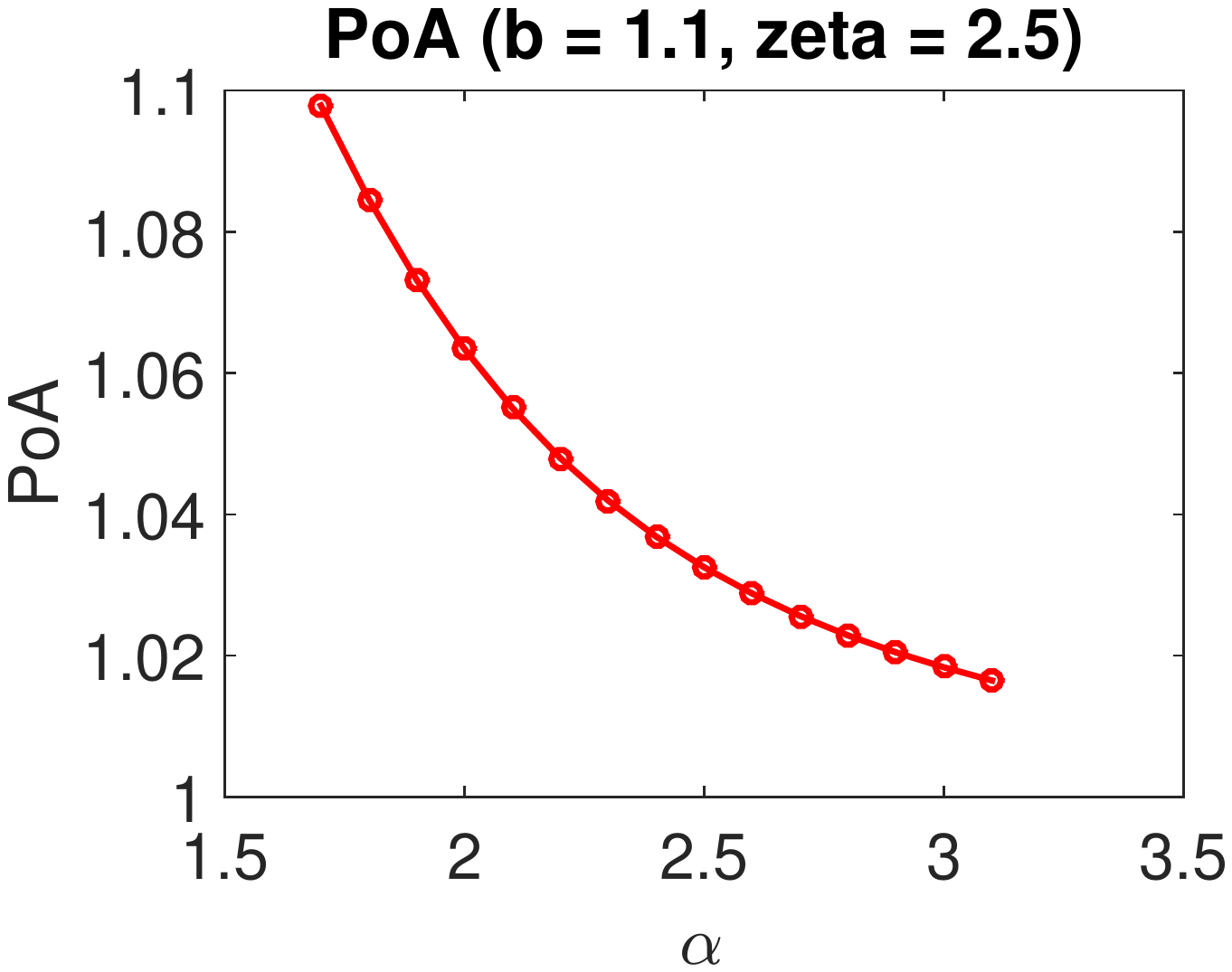}
\includegraphics[width=1.75in]{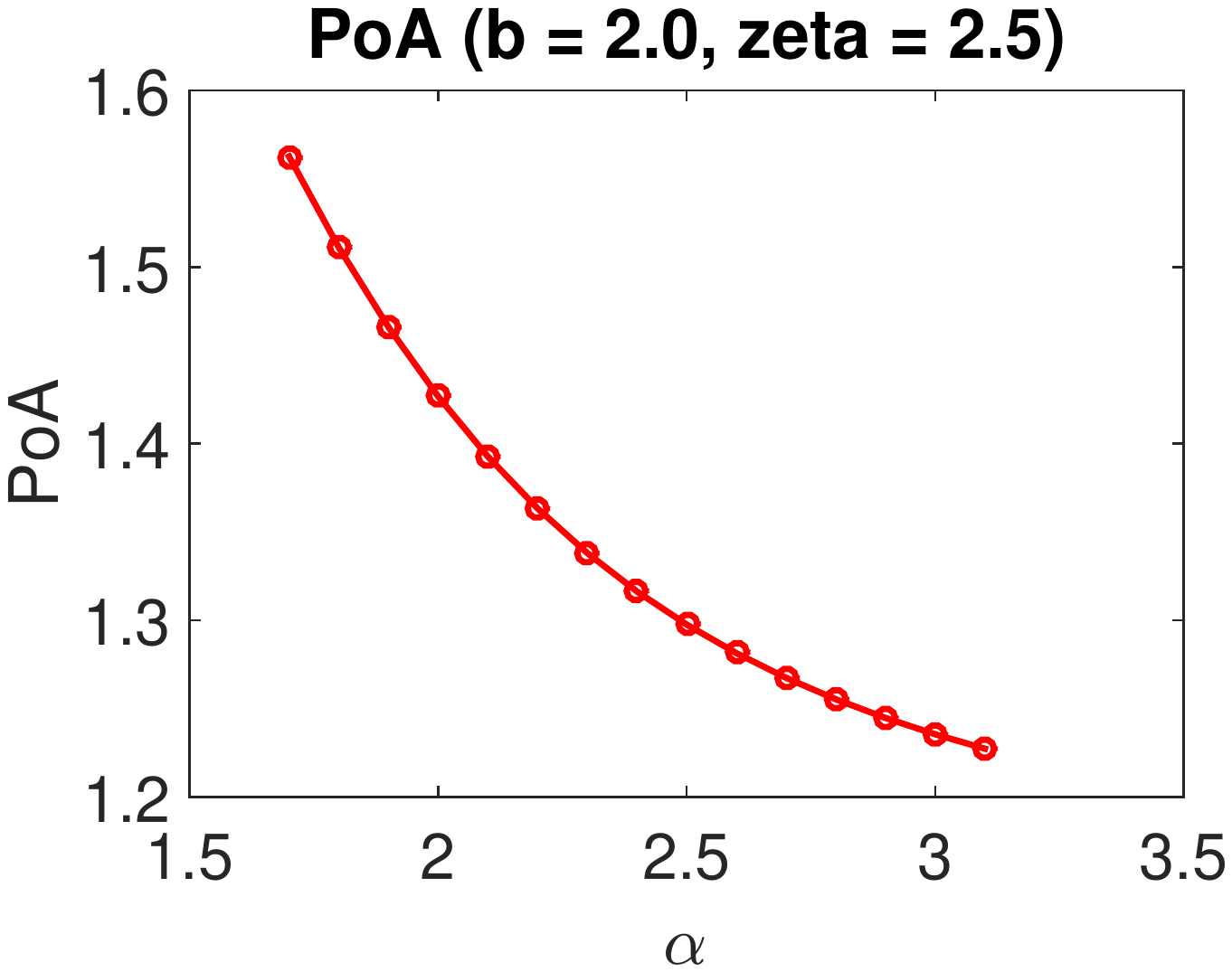}
}
\centerline{(c)}
\caption{(a) Risk exposure, (b) social cost and 
	(c) price of anarchy ($\zeta = 2.5$).}
\label{fig:25}
\end{figure}

In order to examine how the infection probability
function $p$ influences the risk exposure and social
cost, we plot them for $\zeta = 2.5$ (with the 
remaining parameters being identical as before)
in Fig.~\ref{fig:25}. This represents
a scenario in which the available security measures
are more effective in that the infection probability
falls more quickly with increasing security
investments. 

Comparing Figs.~\ref{fig:15} and \ref{fig:25}, we 
observe that although the final risk exposure and
social costs are smaller than in the previous case with 
$\zeta = 1.5$, 
their qualitative behavior remains similar. This 
indicates that as the security measures improve, both 
the risk exposure and the social cost will likely drop. 
However, we suspect that the properties of underlying 
dependence graph will have comparable effects on them.

\section{Conclusions}	\label{sec:Conclusion}

We examined how we could internalize the externalities
produced by security investments of selfish agents in IDS
settings. Our study brought to light an interesting relation
between the local minimizers of social cost and the 
pure-strategy NEs of a related population game in which
the costs of agents are modified to reflect their 
contributions to social cost. Making use of this
relation, we demonstrated that it is possible to reduce
the social cost and enhance the security by imposing 
appropriate penalties on the agents on the basis of the 
losses they suffer as a result of indirect attacks from 
their neighbors. Finally, we proved that as the security 
of agents becomes more interdependent and the weighted 
node degree distribution becomes stochastically
larger, the local security experienced by a node with 
a fixed degree improves.

\begin{appendices}

\section{Proof of Lemma~\ref{lemma:existence}}
	\label{appen:existence}
	
Let $H: \cA^{D_{\max}} \to \cA^{D_{\max}}$, 
where $H_d(\ba) = I^{\opt}(\tau_A + d \cdot e(\ba; \bs))$, 
$d \in \cD$. Then, from Assumption~\ref{assm:pa} and the 
definition of $I^{\opt}$, the mapping $H$ is continuous.
Therefore, since $\cA^{D_{\max}}$ is a compact, convex subset of 
$\R^{D_{\max}}$, the Brouwer's fixed point theorem tells
us that there exists a fixed point of $H$, say $\ba'$, 
such that $H(\ba') = \ba'$. It is clear from the definition 
of a pure-strategy NE in Definition~\ref{defn:PSNE}
that $\ba'$ is a pure-strategy NE.

\section{Proof of Theorem~\ref{thm:unique}}	
	\label{appen:unique}
	
Before we proceed with the proof of theorem, recall that 
$e(\ba; \bs) = g( \gamma_{\avg}(\ba; \bs) )
= g^+( \rho(\ba; \bs) )$, where 
$\gamma_{\avg}(\ba; \bs) = \tau_A \ \beta_{IA} \ \bw(\bs)^T
{\bf p}(\ba) = \tau_A \ \beta_{IA} \ \rho(\ba; \bs)$ and 
$\rho(\ba; \bs) = \bw(\bs)^T {\bf p}(\ba)$. 

In order to prove the theorem, we will first show that if
$\ba^1$ and $\ba^2$ are two pure-strategy NEs, then 
$e(\ba^1; \bs) = e(\ba^2; \bs)$. 
We will prove this claim by contradiction. 
Suppose that the claim is false and the risk exposures are not 
the same. Without loss of 
generality, assume $e(\ba^1; \bs) < e(\ba^2; \bs)$. 
Together with Corollary~\ref{coro:mono}, this 
implies $p(a^1_d) \geq p(a^2_d)$ for all $d \in \cD$ and, 
consequently, 
\beqan
e(\ba^1; \bs)
\myeq g^+ \big( \bw(\bs)^T {\bf p}(\ba^1) \big) \lb
\mygeq g^+ \big( \bw(\bs)^T {\bf p}(\ba^2) \big) 
	= e(\ba^2; \bs),
\eeqan
which contradicts the earlier assumption $e(\ba^1; \bs) < 
e(\ba^2; \bs)$. 

The theorem is now a direct consequence of this claim and
Corollary~\ref{coro:mono}; 
if the theorem is false and there exist two distinct 
pure-strategy NEs $\ba^1$ and $\ba^2$, it necessarily
implies that there is some $d' \in \cD$ such that 
$a^1_{d'} < a^2_{d'}$. However, this is possible only if
$e(\ba^1; \bs) < e(\ba^2; \bs)$ by Corollary
\ref{coro:mono}, which contradicts the
claim proved above.

\section{Proof of Theorem~\ref{thm:relation1}}
	\label{appen:relation1}

First, a local minimizer of social cost, say $\ba^\star$,
must satisfy the 
first-order necessary Karush-Kuhn-Tucker (KKT) condition~\cite{NP}: 
There exist non-negative KKT multipliers $\boldsymbol{\lambda} = 
(\lambda_{d}; \ d \in \cD)$ and   
$\boldsymbol{\mu} = (\mu_{d}; \ d \in \cD )$
such that, for all $d \in \cD$,   
\begin{enumerate}
\item[c1.] $\frac{\partial}{\partial a_d} \tilde{C}_T(\ba^\star, \bs) 
= \lambda_{d} - \mu_{d}$, and

\item[c2.] $\lambda_{d} (a^\star_d - I_{\min}) = 0$ and $\mu_{d} 
(I_{\max} - a^\star_{d}) = 0$. 
\end{enumerate}

From (\ref{eq:SocialCost}) and (\ref{eq:CostPure}), 
\beqan
&& \myhb \frac{\partial}{\partial a_d} \tilde{C}_T(\ba^\star, \bs)
	= \sum_{d' \in \cD} \frac{\partial}{\partial a_d}
	\tilde{C}(\ba^\star, d', \bs) \lb
\myeq \frac{\partial}{\partial a_d} 
		e(\ba^\star; \bs) \sum_{d' \in \cD}
	s_{d'} \ d' \ L(a^\star_{d'}) \lb
&& + s_d \left( \big(\tau_A + d \ e(\ba^\star; \bs) \big) 
	L \ \dot{p}(a^\star_d) + 1 \right),
\eeqan
where
\beqan
\frac{\partial}{\partial a_d} e(\ba^\star; \bs)
\myeq \dot{g}^+\big( \bw(\bs)^T {\bf p}(\ba^\star) \big) \
	w_d(\bs) \ \dot{p}(a^\star_d) \lb
\myeq \dot{g}^+(\rho(\ba^\star; \bs)) \ w_d(\bs) \ \dot{p}(a^\star_d). 
\eeqan
Substituting $w_d(\bs) = d \cdot s_d / d_{\avg}(\bs)$ 
and using $e(\ba; \bs) = g^+\big( \rho(\ba; \bs) \big)$ in 
the KKT conditions, we obtain
\beqa
&& \myhb \frac{\partial}{\partial a_{d}}
\tilde{C}_T(\ba^\star, \bs) \lb
\myeq s_{d} \ d \ \dot{g}^+(\rho(\ba^\star; \bs)) 
	\ \rho(\ba^\star; \bs) \ 
		L \ \dot{p}(a^\star_{d}) \lb
&& + s_{d} \big( ( \tau_A + d \ e(\ba^\star; \bs)) L 
			\ \dot{p}(a^\star_{d}) + 1 \big) \lb
\myeq s_{d} \big( \big( \tau_A + d 
	\ \vartheta(\rho(\ba^\star; \bs) ) \big) 
	 L \ \dot{p}(a^\star_{d}) + 1 \big) \lb
\myeq \lambda_{d} - \mu_{d}.
	\label{eq:2-1}
\eeqa
Dividing both sides by $s_d$ yields
\beqa
\big( \tau_A + d 
	\ \vartheta(\rho(\ba^\star; \bs) ) \big)
	 L \ \dot{p}(a^\star_{d}) + 1 
\myeq \frac{\lambda_d - \mu_d}{s_d}  \lb
\myeq \lambda^n_d - \mu_d^n,  
	\label{eq:4-1}
\eeqa 
where $\lambda^n_d = \lambda_d / s_d$ and 
$\mu^n_d = \mu_d / s_d$. 

At a pure-strategy NE of the modified population game, 
${\ba}^{mc}$, a player in population $d$ faces the constrained 
{\em convex} optimization problem
\beqan
\min_{a \in \cA} \big( \tau_A + d \ \vartheta(\rho({\ba}^{mc}; \bs) ) \big) 
	L(a) + a. 
\eeqan
From the definition of NE, the first-order necessary and 
sufficient KKT condition states that there exist non-negative 
KKT multipliers $\tilde{\lambda}_d$ and $\tilde{\mu}_d$ such 
that
\beqa
\big( \tau_A + d \ \vartheta(\rho(\ba^{mc}; \bs)) \big) L 
	\ \dot{p}({a}^{mc}_d) + 1 
\myeq \tilde{\lambda}_d - \tilde{\mu}_d. 
	\label{eq:4-2}
\eeqa

By comparing (\ref{eq:4-1}) and (\ref{eq:4-2}), we see that  
the local minimizer $\ba^\star$, which satisfies 
the first-order necessary KKT conditions in (\ref{eq:4-1}),
also meets the {\em sufficient} conditions in (\ref{eq:4-2})
with $\tilde{\lambda}_d = \lambda^n_d$ and 
$\tilde{\mu}_d = \mu^n_d$ for all $d \in \cD$. Therefore, it
is a pure-strategy NE of the modified population game.

\section{Proof of Theorem~\ref{thm:GC_NE_Monotonicity}}	
	\label{appen:GC_NE_Monotonicity}

Suppose that the first part of the theorem is false and 
$e(\ba^1; \bs^1) > e(\ba^2; \bs^2)$. By Corollary
\ref{coro:mono}, this implies that, for all $d \in \cD$, 
\beqa
	p(a^1_d) \leq p(a^2_d). 
	\label{eq:thm11-1}
\eeqa
Using $e(\ba; \bs) = g^+\big( \bw(\bs)^T {\bf p}(\ba) 
\big)$ defined in Section~\ref{subsec:PG}, 
\beqa
e(\ba^2; \bs^2)
\myeq g^+\big( \bw(\bs^2)^T {\bf p}(a^2_d) \big) \lb
\mygeq  g^+\big( \bw(\bs^2)^T {\bf p}(a^1_d) \big)
	\label{eq:thm11-2} \\
\mygeq g^+ \big( \bw(\bs^1)^T {\bf p}(a^1_d) \big)
	\label{eq:thm11-3} \\
\myeq e(\ba^1; \bs^1), 
	\nonumber
\eeqa
which contradicts the earlier assumption. The first inequality in
(\ref{eq:thm11-2}) follows from (\ref{eq:thm11-1}),
and the second inequality in (\ref{eq:thm11-3}) is a consequence of
the condition (\ref{eq:thm11}) in Theorem~\ref{thm:GC_NE_Monotonicity} 
and Corollary~\ref{coro:mono}, i.e., $p(a^i_d) \leq p(a^i_{d'})$ if
$d \geq d'$, $i = 1, 2$.

The claim $a^1_d \leq a^2_d$ for all $d \in \cD$ is an immediate 
consequence of the first part and Corollary~\ref{coro:mono}, 
which states that the optimal investment is nondecreasing in the
risk exposure seen by the players. 

The second part of the theorem is true because, under the
stated assumptions that $\bw(\bs^1)$ first-order stochastically
dominates $\bw(\bs^2)$ and $\ba^2 \in$ int($\cA^{D_{\max}}$), 
the inequality (\ref{eq:thm11-3}) is strict as a consequence of
(the strict inequality in) Corollary~\ref{coro:mono}.

\section{Proof of Theorem~\ref{thm:GC_SO_Monotonicity}}	
	\label{appen:GC_SO_Monotonicity}

We prove the first part of the theorem by contradiction. 
Suppose that the claim is false and $e(\ba^\star(\bs^1); \bs^1) 
> e(\ba^\star(\bs^2); \bs^2)$. First, note that 
Corollary~\ref{coro:unique} tells us that 
$\ba^\star(\bs^i)$, $i = 1, 2$, 
are the unique pure-strategy NE of the modified
population games, and $a^\star_d(\bs^i) \leq 
a^\star_{d+1}(\bs^i)$ for all $d = 1, 
\ldots, D_{\max} -1$. Together with this observation, the 
condition (\ref{eq:thm11}) in the theorem tells us that if 
$a^\star_d(\bs^1) \geq a^\star_d(\bs^2)$
for all $d \in \cD$, 
then $\rho(\ba^\star(\bs^1); \bs^1) \leq \rho(\ba^\star(\bs^2); 
\bs^2)$. Consequently,
$e(\ba^\star(\bs^1); \bs^1) \leq e(\ba^\star(\bs^2); \bs^2)$, 
thereby contradicting the assumption $e(\ba^\star(\bs^1); \bs^1) 
> e(\ba^\star(\bs^2); \bs^2)$.
Therefore, there must exist some $d^* \in \cD$ such that 
\beqa
a^\star_{d^*}(\bs^1) < a^\star_{d^*}(\bs^2).
	\label{eq:thm14-1}
\eeqa
We will show that this contradicts the assumption that
$\ba^\star(\bs^1)$ minimizes the social cost. 
For notational simplicity, we denote $\ba^\star(\bs^i)$, 
$i = 1, 2$, by $\ba^i$ throughout the proof. 

The first-order necessary KKT 
conditions tell us that there exist non-negative KKT
multipliers $\pmb{\lambda}^i = (\lambda^i_d ;  
d \in \cD)$ and 
$\pmb{\mu}^i = (\mu^i_d ;  d \in \cD)$, $i = 1, 2$, which  
satisfy
\beqa
&&\myhb \lambda^i_d(a_d^i - I_{\min}) = 0, \ 
\mu^i_d(I_{\max} - a^i_d) = 0, \mbox{ and } \lb
&& \myhb \frac{\partial}{\partial a_d} \tilde{C}_T(\ba^i, \bs^i)
= \lambda^i_d - \mu^i_d, \ \mbox{for all } d \in \cD.
	\label{eq:KKT}
\eeqa

Recall from (\ref{eq:2-1}) in Appendix~\ref{appen:relation1}, 
\beqa
&& \myhb \frac{\partial}{\partial a_{d^*}}
\tilde{C}_T(\ba^1, \bs^1) \lb
\myeq s^1_{d^*} \big( \big( \tau_A + d^* 
	\ \vartheta(\rho(\ba^1; \bs^1))  
		\big) L \ \dot{p}(a^1_{d^*}) + 1 \big) \lb
\myeq \lambda^1_{d^*} - \mu^1_{d^*}.
	\label{eq:3-1}
\eeqa
However, because $a^1_{d^*} < a^2_{d^*} \leq I_{\max}$, we have 
$\mu^1_{d^*} = 0$. Setting $\mu^1_{d^*} = 0$ and dividing both sides 
of (\ref{eq:3-1}) by $s^1_{d^*}$, we get
\beqa
&& \myhb \big( \big( \tau_A + d^* \ 
	\vartheta( \rho(\ba^1; \bs^1)) 
		\big) L \ \dot{p}(a^1_{d^*}) + 1 \big)
= \frac{\lambda^1_{d^*}}{s^1_{d^*}} \geq 0. 
	\label{eq:3-2}
\eeqa

Similarly, since $a^2_{d^*} > a^1_{d^*} \geq 0$, we have 
$\lambda^2_{d^*} = 0$ and 
\beqan
&& \myhb \frac{\partial}{\partial a_{d^*}}
\tilde{C}_T(\ba^2, \bs^2) \lb
\myeq s^2_{d^*} \big( \big( \tau_A + d^* 
	\ \vartheta( \rho(\ba^2; \bs^2) ) \big) 
		L \ \dot{p}(a^2_{d^*}) + 1 \big) \lb
\myeq - \mu^2_{d^*}. 	
\eeqan
Normalizing both sides by $s^2_{d^*}$, 
\beqa
\hspace{-0.1in} \big( \big( \tau_A + d^* \ 
	\vartheta( \rho(\ba^2; \bs^2)) \big) L \ 
		\dot{p}(a^2_{d^*}) + 1 \big) 
= \frac{-\mu^2_{d^*}}{s^2_{d^*}} \leq 0.
	\label{eq:3-3}
\eeqa

Recall that we assumed $e(\ba^1; \bs^1) > 
e(\ba^2 ; \bs^2)$, hence $\rho(\ba^1; \bs^1) 
> \rho(\ba^2; \bs^2)$. In addition, by Assumption 
\ref{assm:pa}, because $a^1_{d^*} < a^2_{d^*}$, 
we have $\dot{p}(a^1_{d^*}) < \dot{p}(a^2_{d^*}) < 0$. 
Therefore, together with Assumption~\ref{assm:vartheta}, 
we have 
\beqan
0 \leq (\ref{eq:3-2}) < (\ref{eq:3-3}) \leq 0
\eeqan
which is a contradiction. 

To prove the second part of the theorem, first note that
the proof of the first part tells us 
$\ba^\star(\bs^1) \geq \ba^\star(\bs^2)$, where the
inequality is elementwise. Second, under 
the stated assumption $\ba^\star(\bs^2) \in$ 
int($\cA^{D_{\max}}$), Corollaries 
\ref{coro:mono} and \ref{coro:unique}
imply $a^\star_{d}(\bs^2) 
< a^\star_{d+1}(\bs^2)$ for all $d = 1, 2, \ldots, 
D_{\max} - 1$. Finally, together with these
observations, the first-order stochastic dominance 
of $\bw(\bs^1)$ over $\bw(\bs^2)$
leads to $e(\ba^\star(\bs^1); \bs^1) 
< e(\ba^\star(\bs^2); \bs^2)$.

\section{Proof of Lemma~\ref{lemma:1}} \label{appen:lemma1}

The following lemma will be used to prove Lemma~\ref{lemma:1}.

\begin{lemma} \label{lemma:2}
Suppose that ${\bf a} = (a_\ell; \ell = 1, \ldots, K)$
and ${\bf b} = (b_\ell;\ell = 1, \ldots, K)$ are two 
finite sequences of nonnegative real numbers of length
$K > 1$ and satisfy 
\beqa
\frac{b_{\ell+1}}{a_{\ell+1}} \leq \frac{b_\ell}{a_\ell}
	\ \mbox{ for all } \ell = 1, \ldots, K-1. 
	\label{eq:lemma2-0}	
\eeqa
Then, 
\beqa
\frac{ \sum_{\ell = 1}^k b_\ell }{ \sum_{\ell = 1}^k a_\ell }
\mygeq \frac{ \sum_{\ell = 1}^{K} b_\ell }
	{ \sum_{\ell = 1}^{K} a_\ell } 
	\ \mbox{ for all } k = 1, \ldots, K.
	\label{eq:lemma2-1}
\eeqa \\ \vspace{-0.1in}
\end{lemma}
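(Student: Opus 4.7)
The claim is a standard mediant-type inequality: the running mediant $P_k/Q_k$ (with $P_k := \sum_{\ell=1}^k b_\ell$ and $Q_k := \sum_{\ell=1}^k a_\ell$) dominates the overall mediant $P_K/Q_K$ whenever the pointwise ratios $b_\ell/a_\ell$ are nonincreasing in $\ell$. My plan is to reduce the claim to a nonnegative double sum via cross-multiplication, which avoids any division and sidesteps degenerate cases where some $a_\ell$ might vanish.

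First, I would iterate the hypothesis (\ref{eq:lemma2-0}) to conclude that $b_\ell/a_\ell \leq b_m/a_m$ for every $1 \leq m \leq \ell \leq K$, and then restate this in the cross-multiplied form $a_\ell b_m \geq a_m b_\ell$. This form remains valid in the boundary situation where $a_\ell = 0$ for some $\ell$, since (\ref{eq:lemma2-0}) then forces the corresponding $b_\ell$ to vanish as well and the inequality holds trivially.

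Second, I would rewrite the target (\ref{eq:lemma2-1}) in the equivalent cross-multiplied form $P_k Q_K \geq P_K Q_k$; subtracting $P_k Q_k$ from both sides yields $P_k (Q_K - Q_k) \geq Q_k (P_K - P_k)$, i.e.,
$$\sum_{m=1}^{k} \sum_{\ell = k+1}^{K} \big( a_\ell b_m \ - \ a_m b_\ell \big) \ \geq \ 0.$$
Since every index pair $(m, \ell)$ appearing in the double sum satisfies $m \leq k < \ell$, each summand is nonnegative by the first step, so the inequality follows.

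There is no real obstacle beyond correctly arranging the cross-multiplication; the argument is essentially algebra with the iterated ratio bound as the only ingredient. In the proof of Lemma~\ref{lemma:1}, this lemma will then be applied with $a_\ell = \ell \cdot s^1_\ell$ and $b_\ell = \ell \cdot s^2_\ell$, which converts the pointwise ratio hypothesis (\ref{eq:lemma0}) on $s^2_d/s^1_d$ into (\ref{eq:lemma2-0}), and subsequently turns (\ref{eq:lemma2-1}) into the desired weighted-tail dominance (\ref{eq:thm11}) after invoking the definition of $w_d(\bs)$.
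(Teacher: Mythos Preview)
Your argument is correct: iterating the ratio hypothesis to obtain $a_\ell b_m \geq a_m b_\ell$ for $m \leq \ell$, and then expressing $P_k Q_K - P_K Q_k$ as the double sum $\sum_{m=1}^k \sum_{\ell=k+1}^K (a_\ell b_m - a_m b_\ell)$, is a clean and complete proof of the lemma.

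As for comparison with the paper: there is nothing to compare against. The paper states Lemma~\ref{lemma:2} in Appendix~\ref{appen:lemma1} solely as an auxiliary tool for proving Lemma~\ref{lemma:1}, and then immediately proceeds to apply it without supplying any proof --- presumably treating it as a standard mediant-type inequality. Your write-up therefore fills in a detail the paper omits, and your closing paragraph (identifying $a_\ell = \ell\, s^1_\ell$, $b_\ell = \ell\, s^2_\ell$) matches exactly how the paper invokes the lemma in the remainder of Appendix~\ref{appen:lemma1}.
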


Proceeding with the proof of Lemma~\ref{lemma:1}, recall that
the condition (\ref{eq:lemma0}) in Lemma~\ref{lemma:1} states
\beqa
\frac{ (d+1) \ s^2_{d+1} }{ (d+1) \ s^1_{d+1} }
\myleq \frac{ d \cdot s_d^2 }{ d \cdot s_d^1 } 
	\ \mbox{ for all } d \in \cD^-.
	\label{eq:lemma2-4}
\eeqa
From the definition of ${\bf w}_d$, the condition
(\ref{eq:thm11}) is equivalent to 
\beqa
\frac{ \sum_{\ell=1}^d \ell \cdot s^2_\ell  }
	{ \sum_{\ell=1}^d \ell \cdot s^1_\ell  }
\mygeq \frac{ \sum_{\ell=1}^{D_{\max}} \ell \cdot s_\ell^2 }
	{ \sum_{\ell=1}^{D_{\max}} \ell \cdot s_\ell^1 }
	\ \mbox{ for all } d \in \cD.
	\label{eq:lemma2-5}
\eeqa

Let ${\bf a} = (a_d; \ d \in \cD)$ and ${\bf b} = (b_d; \ d \in 
\cD)$, where $a_d = d \cdot s^1_d$ and $b_d = d \cdot s^2_d$. The 
inequalities in (\ref{eq:lemma2-5}) can be rewritten 
in terms of ${\bf a}$ and ${\bf b}$ as 
\beqa
\frac{ \sum_{\ell=1}^d b_\ell }
	{ \sum_{\ell=1}^d a_\ell  }
\mygeq \frac{ \sum_{\ell=1}^{D_{\max}} b_\ell }
	{ \sum_{\ell=1}^{D_{\max}} a_\ell }
	\ \mbox{ for all } d \in \cD.
	\label{eq:lemma2-6}
\eeqa
Moreover, (\ref{eq:lemma2-4}) implies $( b_{d+1} / a_{d+1} )
\leq (b_d / a_d)$, $d \in \cD^-$. The claim of Lemma~\ref{lemma:1} 
in (\ref{eq:lemma2-6}) now follows directly from Lemma~\ref{lemma:2}.

\end{appendices}


\begin{thebibliography}{99}


\bibitem{NVD}
National Vulnerability Database. 
http://nvd.nist.gov.

\bibitem{NIST_SmartGrid}
{\em Introduction to NISTIR 7628 Guidelines for Smart Grid
Cyber Security}, Sep. 2010. 


\bibitem{AndersonMoore}
R. Anderson and T. Moore, 
``Information security economics - and beyond,"
Lecture Notes in Computer Science Volume 4622, pp. 68-91, 2007. 

\bibitem{Bary}
Y. Baryshnikov, 
``{IT} security investment and Gordon-Loeb's $1/e$ rule,"
Proc. of the 11th Annual Workshop on the Economics of Information 
	Security (WEIS), Berlin (Germany), Jun. 2012. 

\bibitem{Beale2011}
N. Beale, D.G. Rand, H. Battey, K. Croxson, R.M. May and M.A. Nowak, 
\lq \lq Individual versus systemic risk and the regulator's dilemma,''
{\em Proceedings of the National Academy of Sciences of the United
States of America (PNAS)}, 108(31):12647-12652, Aug. 2011. 
	
\bibitem{NP}
D.P. Bertsekas, 
{\em Nonlinear Programming}, Athena Scientific, 1995. 

\bibitem{BilgeDumit2012}
L. Bilge and T. Dumitras, 
\lq \lq Before we knew it: an empirical study of zero-day attacks in the 
real world,"
Proc. of ACM Conference on Computer and Communications Security (CCS), 
Oct. 2012. 


\bibitem{BolotLelarge2008}
J.C. Bolot and M. Lelarge,
\lq \lq A new perspective on Internet security using insurance,"
Proc. of IEEE INFOCOM, Phoenix (AZ), Apr. 2008.

\bibitem{Bou-Harb}
E. Bou-Harb, C. Fachkha, M. Pourzandi, M. Debbabi, and C. Assi, 
\lq \lq Communication security for smart grid distribution networks,''
{\em IEEE Communications Magazine}, pp. 42-49, Jan. 2013. 

\bibitem{Caccioli2011}
F. Caccioli, T.A. Catanach, and J.D. Farmer, 
\lq \lq Heterogeneity, correlations and financial contagion,''
arXiv:1109.1213, Sep. 2011. 

\bibitem{Caccioli2012}
F. Caccioli, T.A. Catanach, and J.D. Farmer, 
\lq \lq Stability analysis of financial contagion due to overlapping 
	portfolios,''
 arXiv:1210.5987, Oct. 2012.

\bibitem{Callaway}
D.S. Callaway, M.E.J. Newman, S.H. Strogatz and D.J. Watts, 
\lq \lq Network robustness and fragility: percolation and random
	graphs,"
{\em Physical Review Letters}, 85(25):5468-5471, Dec. 2000.



\bibitem{ChungLu}
F. Chung and L. Lu, 
``Connected components in random graphs with given expected degree
	sequences,"
{\em Annals of Combinatorics}, 6(2):125-145, Nov. 2002. 


\bibitem{Gkonis2010}
K.G. Gkonis and H.N. Psaraftis,
``Container transportation as an interdependent security problem,"
{\em Journal of Transportation Security}, 
3(4):197-211, Dec. 2010. 

\bibitem{Gleeson2007}
J.P. Gleeson and D.J. Cahalane, 
\lq \lq Seed size strongly affects cascades on random networks,"
{\em Phys. Rev. E}, {\bf 75}, 056103, 2007. 

\bibitem{Gordon2015}
L.A. Gordon, M.P Loeb, W. Lucyshyn and L. Zhou, 
``Externalities and the magnitude of cyber security 
underinvestment by private sector firms: a modification of
the Gordon-Loeb model,"
{\em Journal of Information Security}, 6:24-30, 2015. 

\bibitem{Gross2008}
J. Grossklags, N. Christin and J. Chuang, 
``Secure or insecure? a game-theoretic analysis of information 
security games," 
Proc. of of the 17th International Conference on World Wide Web, 
pp. 209-218, Beijing (China), Apr. 2008. 

\bibitem{HealKun2002}
G. Heal, H.C. Kunreuther and P.R. Orszag,
``Interdependent security: Implications for homeland 
	security policy and other areas,"
Brookings Policy Brief Series, \#108, Oct. 2002. 

\bibitem{HealKun2004}
G. Heal and H.C. Kunreuther, 
\lq \lq Interdependent security: a general model,"
National Bureau of Economic Research (NBER) Working Paper No. 
10706, Aug. 2004. 

\bibitem{Hofmann}
A. Hofmann, 
``Internalizing externalities of loss prevention through
insurance monopoly: an analysis of interdependent risks,"
{\em Geneva Risk Insurance Review}, 32:91-111, 2007. 

\bibitem{Jiang2011}
L. Jiang, V. Anantharam and J. Walrand, 
\lq \lq How bad are selfish investments in network security?",
{\em IEEE/ACM Transactions on Networking}, 19(2):549-560, 
Apr. 2011. 

\bibitem{KearnsOrtiz}
M. Kearns and L.E. Ortiz, 
``Algorithms for interdependent security games,"
{\em Advances in Neural Information Processing Systems 16}, 
2003. 


\bibitem{KunHeal2003}
H.C. Kunreuther and G. Heal, 
\lq \lq Interdependent Security,"
{\em The Journal of Risk and Uncertainty}, 
26(2/3):231-249, 2003. 

\bibitem{KunMichel2009}
H.C. Kunreuther and E.O. Michel-Kerjan,
``Assessing, managing and benefiting from global interdependent
	risks: The case of terrorism and natural disasters,"
{\em Global Business and the Terrorist Threat}, 
edited by H.W. Richardson, P. Gordon and J.E. Moore, 
Edward Elgar Publishing, 2009. 

\bibitem{La_TON_Cascade}
R.J. La,
``Interdependent security with strategic agents and global
	cascade,"
 {\em IEEE/ACM Trans. on Networking}, 24(3):1378-1391, 
 	Jun. 2016.

\bibitem{La_assortativity}
R.J. La, 
``Influence of network mixing on interdependent security:
local analysis," under review. A preprint available at http://www.ece.umd.edu/$\sim$hyongla/preprint/La\_Globecom16\_TR.pdf.


\bibitem{Laszka}
A. Laszka, M. Felegyhazi and L. Butty$\acute{{\rm a}}$n, 
``A survey of interdependent information security games,"
{\em ACM Computing Surveys}, 47(2):23:1-23:38, Jan. 2015. 

\bibitem{LelargeBolot2008}
M. Lelarge and J. Bolot, 
\lq \lq A local mean field analysis of security investments
in networks,"
Proc. of the 3rd International Workshop on Economics of Networked 
Systems (NetEcon), pp. 25-30, Seattle (WA), Aug. 2008. 

\bibitem{LelargeBolot2009}
M. Lelarge and J. Bolot, 
\lq \lq Economic incentives to increase security in the 
Internet: the case for insurance,"
Proc. of IEEE INFOCOM, Rio de Janeiro (Brazil), Apr. 2009. 


\bibitem{Miura2008a}
R.A. Miura-Ko, B. Yolken, N. Bambos and J. Mitchell, 
``Security investment games of interdependent organizations,"
Proc. of Annual Allerton Conference, pp. 252-260, 
Monticello (IL), Sep. 2008. 


\bibitem{NagLiu2014}
P. Naghizadeh and M. Liu, 
``Closing the price of anarchy gap in the interdependent security
game,"
Information Theory and Applications (ITA) Workshop, San Diego
(CA), Feb. 2014. 

\bibitem{Newman2002}
M.E.J. Newman, 
``Assortative mixing in networks," 
{\em Phys. Rev. Lett.}, {\bf 89}, 208701, Oct. 2002.

\bibitem{Newman2003}
M.E.J. Newman, 
``Mixing patterns in networks," 
{\em Phys. Rev. E}, {\bf 67}, 026126, Feb. 2003.


\bibitem{Ogut2005}
H. Ogut, N. Menon and S. Raghunathan, 
``Cyber insurance and IT security investment: impact of interdependent
	risk,"
Proc. of the 4th Workshop on the Economics of Information Security 
(WEIS), Cambridge (MA), Jun. 2005. 


\bibitem{Pastor2005}
R. Pastor-Satorras and A. Vespignani,
``Epidemics and immunization in scale-free networks,"
{\em Handbook of Graphs and Networks: From the Genome to the 
	Internet}, Wiley, 2005. 


\bibitem{Sandholm}
W.H. Sandholm {\em Population Games and Evolutionary Dynamics}, 
The MIT Press, 2010.

\bibitem{Schneider2011}
C.M. Schneider, M. Tamara, H. Shlomo, H.J. Herrmann, 
``Suppressing epidemics with a limited amount of immunization
	units,"
{\em Phys. Rev. E}, {\bf 84}, 061911, Dec. 2011. 

\bibitem{ShakedShan}
M. Shaked and J.G. Shanthikumar,
{\em Stochastic Orders}, 
Springer Series in Statistics, Springer, 2007. 

\bibitem{ShapiroVarian}
C. Shapiro and H.R. Varian,
{\em Information Rules},  Harvard Business School Press, 1999.


\bibitem{Varian}
H.R. Varian, 
\lq \lq System reliability and free riding," 
{\em Economics of Information Security}, 
12:1-15, 2004. 

\bibitem{Varian_Microeconomics}
H.R. Varian, 
{\em Microeconomic Analysis}, 3rd edition, 
W.W. Norton \& Company, 
1992. 

\bibitem{Watts2002}
D.J. Watts, 
\lq \lq A simple model of global cascades on random networks,''
{\em Proceedings of the National Academy of Sciences of the United
States of America (PNAS)}, 99(9):5766-5771, Apr. 2002. 

\bibitem{Yagan2012}
O. Ya$\breve{\rm g}$an and V. Gligor, 
\lq \lq Analysis of complex contagions in random multiplex networks,"
{\em Physical Review E}, 86, 036103, Sep. 2012.

\bibitem{Zhao2009}
X. Zhao, L. Xue and A.B. Whinston, 
``Managing interdependent information security risks: 
cyberinsurance, managed security services, and risk pooling
arrangements,"
Proc. of International Conference on Information Systems (ICIS), 
Phoenix (AZ), 2009. 


\end{thebibliography}
\end{document}